
\documentclass[conference,letterpaper]{IEEEtran}

\addtolength{\topmargin}{9mm}

%
%
\usepackage[utf8]{inputenc} 
\usepackage[T1]{fontenc}
\usepackage{url}
\usepackage{ifthen}
\usepackage[cmex10]{amsmath} 

\usepackage{amssymb,amsthm,amsmath}
\usepackage{mathtools}
\usepackage{float}
\usepackage{array}
\usepackage{tikz}
\usepackage{algpseudocode}
\usepackage{algorithm}
\usepackage{amssymb}
\usepackage{amsthm}

\usetikzlibrary{shapes.geometric}
\newfloat{procedure}{htbp}{loa}
\floatname{procedure}{Procedure} 
\usepackage{enumitem}
\usepackage[square,numbers]{natbib}
\usepackage{soul}
\usepackage{float}
\usepackage{subfig}
\usepackage{caption}
\usepackage{tikz}
\usepackage{capt-of}
\usepackage{xcolor}
\usepackage{changes}
\usepackage{breqn}


\newtheorem{theorem}{Theorem}

\newtheorem{definition}{Definition}

\newtheorem{lemma}{Lemma}

\newtheorem{proposition}{Proposition}
\newtheorem{remark}{Remark}

\bibliographystyle{abbrvnat}

\newcommand{\R}{\mathbb{R}} 

\newcommand{\Sbold}{\mathbf S}
\newcommand{\x}{\mathbf x}

\newcommand{\calP}{\mathcal{P}}
\newcommand{\mbbS}{\mathbb{S}}



\newcommand{\g}{g} 
\newcommand{\States}{\mathbb{S}}
\newcommand{\cost}{\mathcal{L}}

\newcommand{\lp}{\left(}
\newcommand{\rp}{\right)}

\newcommand{\Cost}{{\mbox{Cost}}}
\newcommand{\OPT}{{\mbox{OPT}}}
\newcommand{\Tie}{{\mbox{Tie}}}
\newcommand{\MCP}{\mathbb{H}}
\newcommand{\LTUN}{{L_{{Tun}}}}

\usepackage{xcolor}
\usepackage{algpseudocode}
\usepackage[parfill]{parskip}
\usepackage{amsmath}
\usepackage{comment}
\usepackage{caption}
 \usepackage{capt-of}
\usetikzlibrary{shapes.geometric}

\DeclareMathOperator*{\argmax}{argmax}

\interdisplaylinepenalty=2500 

\hyphenation{op-tical net-works semi-conduc-tor}

\begin{document}
\title{A (Weakly)  Polynomial Algorithm for AIVF Coding} 


\author{%
  \IEEEauthorblockN{Mordecai J.~Golin}
  \IEEEauthorblockA{University of Massachusetts, Amherst\\
                    Email: mgolin@mass.edu}
  \and
  \IEEEauthorblockN{Reza Hosseini Dolatabadi and Arian Zamani}
  \IEEEauthorblockA{Department of Computer Engineering\\ 
                    Sharif University of Technology, Tehran, Iran\\
                    Emails: \{reza.dolatabadi256\}, \{arian.zamani243\}@sharif.edu}
}


\maketitle


\begin{abstract}
It is possible to improve upon Tunstall coding using a collection of multiple parse trees.  The best such results so far are Iwata and Yamamoto's maximum cost AIVF codes.  The most efficient algorithm for designing such codes
is an iterative one that could run in exponential time.  In this paper, we show that this problem fits into the framework of a newly developed technique that uses linear programming with the Ellipsoid method to solve the minimum cost Markov chain problem. This permits constructing maximum cost AIVF codes in (weakly) polynomial time.
\end{abstract}

\section{Introduction}\label{sec: intro}
Consider a stationary memoryless source with alphabet $S=\{a_0,a_1,...,a_{|S|-1}\}$ such that the symbol $a_i$ is generated with probability $p(a_i)$ for each $i\in \mathcal{I}_{|S|}$, where $\mathcal{I}_{n}$ denotes the integer set $\{0,1,..,n-1\} $. Without loss of generality we assume that $p(a_0)\geq p(a_1)\geq ... \geq p(a_{|S|-1})$. 
Tunstall Codes ~\cite{tunstall-original} 
map a dictionary $\mathcal{D}$ of variable-length prefix-free phrases of the source symbols into codewords of fixed-length.

Tunstall Coding is usually considered optimal amongst variable-to-fixed (VF) codes in the sense that it achieves the smallest coding rate $R(\mathcal{D},s):=\frac{\log D}{E[L(S)]}$, where $E[L(S)]$ is the average length of the phrases parsed by dictionary $\mathcal{D}$ and $D=|\mathcal{D}|$. More accurately, Tunstall Coding is only optimal among prefix-free dictionaries. However, VF codes don't need to have prefix-free Dictionaries. The dictionary only needs to be exhaustive, meaning that if the current source sequence is sufficiently large,  it is always possible to match a prefix of the source sequence with a parseword from the dictionary. Yamamoto and Yokoo ~\cite{aivf-original} proposed  AIVF (almost instantaneous VF) codes; this was later extended by Dubé and Haddad ~\cite{dube-haddad}. This generalization of Tunstall coding uses multiple dictionaries that aren't necessarily prefix-free and as a result, it achieves a better coding rate than Tunstall Coding.
Very recently,  Iwata and  Yamamoto \cite{AIVF-iterative} described an exponential time method for constructing the maximum  cost (minimum coding rate) AIVF codes for a given $D$.

In this paper we show that this problem fits into the framework of a newly developed technique  \cite{golinaifv-m} that uses linear programming with the Ellipsoid method
\cite{ellipsoid} to solve the minimum cost markov chain problem. This will permit constructing maximum cost AIVF codes for a given $D$ in time polynomial  in $|S|$, $D$ and $b$, where $b$ is the maximum number of bits needed to encode any of the $p(a_i).$

{\em Note: This is a purely theoretical result that demonstrates that the problem is technically  polynomial-time solvable. It is not practically implementable because of its use of the Ellipsoid method as a subroutine.
An  obvious remaining open question is to find an efficient implementable polynomial time algorithm.}

The next subsection quickly describes Tunstall and then AIVF coding,
For consistency,  notation and examples are mostly the same as those given in ~\cite{AIVF-iterative}.

\subsection{Tunstall Coding}
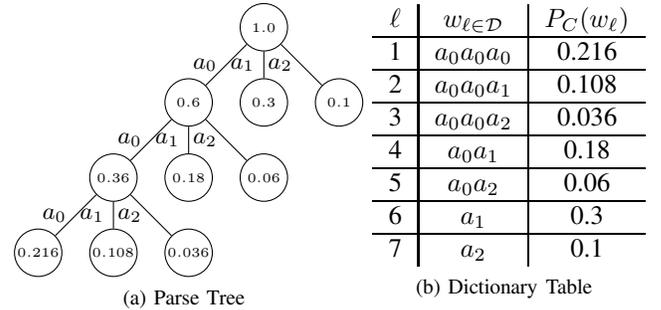
\begin{figure}[ht] 
    \centering
    \subfloat[Parse Tree]{
\begin{tikzpicture}[
node/.style={circle,draw,minimum width=6.3mm,inner sep=0}, level 1/.style={sibling distance=10mm},level distance=10mm, edge from parent path={(\tikzparentnode) -- (\tikzchildnode)},baseline
]
]
\node [node]{\tiny $1.0$}
[child anchor=north]
child {node[node]{\tiny $0.6$}
child {node[node]{\tiny $0.36$} 
child {node[node]{\tiny $0.216$} edge from parent node[left,draw=none,yshift=0mm,xshift=0mm]{\small $a_0$}}
child {node[node]{\tiny $0.108$} edge from parent node[left,draw=none,yshift=0mm,xshift=0mm]{\small $a_1$}}
child {node[node]{\tiny $0.036$} edge from parent node[left,draw=none,yshift=0mm,xshift=0mm]{\small $a_2$}}
edge from parent node[left,draw=none,yshift=0mm,xshift=0mm]{\small $a_0$}}
child {node[node]{\tiny $0.18$} edge from parent node[left,draw=none,yshift=0mm,xshift=0mm]{\small $a_1$}}
child {node[node]{\tiny $0.06$} edge from parent node[left,draw=none,yshift=0mm,xshift=0mm]{\small $a_2$}}
edge from parent node[left,draw=none,yshift=0mm,xshift=0mm]{\small $a_0$}
}
  child {node[node]{\tiny $0.3$} edge from parent node[left,draw=none,yshift=0mm,xshift=0mm]{\small $a_1$}}
  child {node[node]{\tiny $0.1$} edge from parent node[left,draw=none,yshift=0mm,xshift=0mm]{\small $a_2$}};
\end{tikzpicture}}
\subfloat[Dictionary Table]{
\begin{tabular}[t]{c|c|c}
$\ell$ & $w_{ \ell \in\mathcal{D}}$  & $P_{C}(w_{ \ell})$                     \\ \hline
1 & $a_0 a_0 a_0$ & 0.216 \\ \hline
2 & $a_0 a_0 a_1$ & 0.108                     \\ \hline
3 & $a_0 a_0 a_2$ & 0.036                     \\ \hline
4 & $a_0 a_1$  & 0.18                      \\ \hline
5 & $a_0 a_2$  & 0.06                      \\ \hline
6 & $a_1$   & 0.3                       \\ \hline
7 & $a_2$   & 0.1                      
\end{tabular}}
\caption{(Figure 1 in \cite{AIVF-iterative}) The parse tree and dictionary $\mathcal{D}$ of the Tunstall code for source of $S=\{a_0,a_1,a_2\}$ with $p(a_0)=0.6$, $p(a_1)=0.3$, $p(a_2)=0.1$ and $D=7$. }
\label{fig:tunstall example}
\end{figure}

\begin{algorithm}[]
\caption{Tunstall algorithm}\label{alg:tun}
\begin{algorithmic}[1]
\Require $k\in\mathbb{Z}^{+}$ and $\{p(a_i)$ $|$ $i\in \mathcal{I}_{|S|}\}$.
\Ensure $\mathcal{D}$ such that $D=|S|+(|S|-1)k$.
\algrenewcommand\algorithmicrequire{\textbf{Initialization:}}
\Require $t \gets $Root +$S$.
\Comment{Root has only one node}
\For{$i=1$ to $k$}
\State $w_{max} \gets $ $\argmax\limits_{w\in\mathcal{D}(t)} p_{C}(w)$ \Comment{$\mathcal{D}(t)$ is the dictionary induced by parse tree t.}
\State $t \gets t+\{w_{max} \cdot a_i $ $|$ $i\in \mathcal{I}_{|S|}\}$
\EndFor \\
\Return $\mathcal{D}(t)$
\end{algorithmic}
\end{algorithm}

Figure \ref{fig:tunstall example} shows an example of a Tunstall code with its associated 
 parse tree and dictionary. Every internal node has $|S|$ children, each corresponding to a different source symbol with a different codeword assigned to each leaf. The parse dictionary $\mathcal{D}=\{w_\ell|\ell=1,2,...,D\}$ consists of the source symbol strings from the root to all the leaves, and $D=|S|+(|S|-1)k$ for some $k\geq0$.

In order to encode the source sequence using dictionary $\mathcal{D}$ we first map a prefix $w_\ell \in \mathcal{D}$ of the source sequence to the corresponding code word $\ell$ of fixed-length and then remove the prefix $w_\ell$ from the source sequence. We continue this encoding process until the source sequence becomes empty (ignoring the case in which a few symbols might be left that cannot be parsed by the dictionary).

Since  the symbols of the source sequence are generated independently we know that the probability $p_{W}(w)$ of phrase $w\in S^{+}$ 
satisfies
$p_{W}(w)=\prod_{i=1}^{n}p(c_i),$
where $w=c_1c_2...c_n$,
$c_i \in S,$
and $p_{W}(\lambda)=1$ where $\lambda$ is the empty string.  Figure \ref{fig:tunstall example} illustrates that 
that the value of a node is the probability $p_{W}(w)$ of the phrase $w$ read by the path from the root to that node. In the case of Tunstall codes the occurrence probability $p_C(w_\ell)$ of a parse phrase $w_\ell \in \mathcal{D}$ is given by
\begin{equation} \label{to:tunstall occurrence}
    p_C(w_\ell)=p_{W}(w_\ell).
\end{equation}
The average parse length 
$E[\LTUN]$
of Tunstall code with dictionary $\mathcal{D}$ for source $S$ is therefore  
\begin{equation}\label{tunstall average}
    E[\LTUN(S)]:=\sum_{w_\ell \in \mathcal{D}}p_C(w_\ell)l(w_\ell),
\end{equation}
where $l(w)$ denotes the length of $w$. For the example shown in Figure \ref{fig:tunstall example},  $E[\LTUN(S)]=1.96$.

The Tunstall coding algorithm is given in 
Algorithm \ref{alg:tun}. For given source $S$ and $k >0$ It is known to construct the    dictionary $\mathcal{D}$ satisfying  $D=|S|+(|S|-1)k$, and maximizing $E[\LTUN(S)]$. 
For a tree $t$ and set of strings $\{w\}$, $t+\{w\}$ means an extension of $t$ by adding edges and nodes corresponding to the root-to-leaf paths given by strings $\{w\}$. The algorithm first starts by taking a tree with one internal node and each symbol as its children. Then for $k$ iterations, it finds the leaf node with the highest corresponding $p_{C}(w)$ and then expands that node by giving it $|S|$ children, each having an edge corresponding to a different symbol.

\subsection{AIVF Coding}
\begin{figure}[ht] 
    \centering
    \subfloat[The parse tree $t_0$]{
\begin{tikzpicture}[
node/.style={circle,draw,minimum width=7.3mm,inner sep=0}, level 1/.style={sibling distance=10mm},level distance=10mm, edge from parent path={(\tikzparentnode) -- (\tikzchildnode)},baseline
]

\node [node]{\tiny $1.0$}
[child anchor=north]
child {node[node]{\tiny $0.6$}
child {node[node]{\tiny $0.36$} 
child {node[node]{\tiny $0.216$} 
child {node[node]{\tiny $0.1296$} edge from parent node[left,draw=none,yshift=0mm,xshift=0mm]{\small $a_0$}}
edge from parent node[left,draw=none,yshift=0mm,xshift=0mm]{\small $a_0$}}
edge from parent node[left,draw=none,yshift=0mm,xshift=0mm]{\small $a_0$}}
edge from parent node[left,draw=none,yshift=0mm,xshift=0mm]{\small $a_0$}
}
  child {node[node]{\tiny $0.3$} 
  child {node[node]{\tiny $0.18$} edge from parent node[left,draw=none,yshift=0mm,xshift=0mm]{\small $a_0$}}
  edge from parent node[left,draw=none,yshift=0mm,xshift=0mm]{\small $a_1$}}
  child {node[node]{\tiny $0.1$} edge from parent node[left,draw=none,yshift=0mm,xshift=0mm]{\small $a_2$}};
\end{tikzpicture}}
\subfloat[Dictionary Table of $\mathcal{D}(t_0)$]{
\begin{tabular}[t]{c|c|c}
$\ell$ & $w_{ \ell \in\mathcal{D}(t_0)}$  & $P_{C_{0}}(w_{ \ell})$                     \\ \hline
1 & $a_0$ & 0.24 \\ \hline
2 & $a_0 a_0$ & 0.144                     \\ \hline
3 & $a_0 a_0 a_0$ & 0.0864                     \\ \hline
4 & $a_0 a_0 a_0 a_0$  & 0.1296                      \\ \hline
5 & $a_1$  & 0.12                      \\ \hline
6 & $a_1 a_0$   & 0.18                       \\ \hline
7 & $a_2$   & 0.1                      
\end{tabular}}

    \subfloat[The parse tree $t_1$]{
\begin{tikzpicture}[
node/.style={circle,draw,minimum width=7.3mm,inner sep=0},noderight/.style={circle,draw,minimum width=7.3mm,inner sep=0, xshift={10mm}}, level 1/.style={sibling distance=10mm},level distance=10mm, edge from parent path={(\tikzparentnode) -- (\tikzchildnode)},baseline
]
\node [node]{\tiny $1.0$}
[child anchor=north]
child {node[node]{\tiny $0.75$}
child {node[node]{\tiny $0.45$} 
child {node[node]{\tiny $0.27$} 
child {node[node]{\tiny $0.162$} edge from parent node[left,draw=none,yshift=0mm,xshift=0mm]{\small $a_0$}}
edge from parent node[left,draw=none,yshift=0mm,xshift=0mm]{\small $a_0$}}
edge from parent node[left,draw=none,yshift=0mm,xshift=0mm]{\small $a_0$}}
child {node[node]{\tiny $0.225$} edge from parent node[left,draw=none,yshift=0mm,xshift=0mm]{\small $a_1$}}
child {node[node]{\tiny $0.075$} edge from parent node[left,draw=none,yshift=0mm,xshift=0mm]{\small $a_2$}}
edge from parent node[left,draw=none,yshift=0mm,xshift=0mm]{\small $a_1$}
}
  child {node[node]{\tiny $0.25$} 
  child {node[noderight]{\tiny $0.15$} edge from parent node[right,draw=none,yshift=0mm,xshift=0mm]{\small $a_0$}}
  edge from parent node[left,draw=none,yshift=0mm,xshift=0mm]{\small $a_2$}};
\end{tikzpicture}}
\subfloat[Dictionary Table of $\mathcal{D}(t_1)$]{
\begin{tabular}[t]{c|c|c}
$\ell$ & $w_{ \ell \in\mathcal{D}(t_1)}$  & $P_{C_{1}}(w_{ \ell})$                     \\ \hline
1 & $a_1 a_0$ & 0.18 \\ \hline
2 & $a_1 a_0 a_0$ & 0.108                     \\ \hline
3 & $a_1 a_0 a_0 a_0$ & 0.162                     \\ \hline
4 & $a_1 a_1$  & 0.225                      \\ \hline
5 & $a_1 a_2$  & 0.075                      \\ \hline
6 & $a_2$   & 0.1                       \\ \hline
7 & $a_2 a_0$   & 0.15                      
\end{tabular}}
\caption{(Figure 2 in \cite{AIVF-iterative}) The optimal AIVF code for the source  $S=\{a_0,a_1,a_2\}$ with  $p(a_0)=0.6$, $p(a_1)=0.3$, $p(a_2)=0.1$ and $D=7$.}
\label{fig:aivf example}
\end{figure}

The core difference between AIVF Coding and Tunstall coding is the usage of multiple dictionaries instead of one. The AIVF Code proposed by Yamamoto and Yokoo ~\cite{aivf-original} and extended by Dubé and Haddad ~\cite{dube-haddad} uses $|S|-1$ Dictionaries. Let 
${\bf t} = \{t_0,t_1,\ldots t_{|S|-2}\}$ 
denote the set of parse trees of these dictionaries and let $\mathcal{D}(t_i)$ denote the dictionary corresponding to the $t_i$ tree. We now explain the properties of the parse trees following the description in ~\cite{AIVF-iterative}. We will need this description later in the design of the dynamic program.\\
\newline
We assume that $ S= \{a_0,\ldots ,a_{|S|-1}\}$ where $p(a_u) \geq p(a_{u^{\prime}})$ for $0 \leq u < u^{\prime} \leq |S| - 1$, and let $v_j$, $j \in \mathcal{I}_{|S|+1}$, represent a node with $j$ children. Then, the multiple parse trees of an AIVF code satisfy the following properties.
\begin{enumerate}
    \item[(p1)] If $v_j$ is the root of $t_i$, then $1\leq j\leq|S|-i$ and a different source symbol $a_u$, $u \in \{i,...,i+j-1\}$ is assigned to each of edges between the root and its children.
    \item[(p2)] If $v_j$ is a leaf, then $j=0$.
    \item[(p3)] If $v_j$ is an internal node, then $1 \leq j \leq |S|-2$ or $j=|S|$,
and a different source symbol $a_u$, $u\in\mathcal{I}_j$ is assigned to each of edges between node $v_j$ and its children.
\end{enumerate}

The dictionaries $\{\mathcal{D}(t_i)$ $|$ $i\in\mathcal{I}_{|S|-1}\}$ satisfy the exhaustive property which means that if $\mathcal{D}(t_i)$ is our current dictionary and our current source sequence is sufficiently large, then it is possible to match a prefix of the source sequence with a parseword from $\mathcal{D}(t_i)$. 
In tree $t_i$, if a node $v_j$ is not the root and $0\leq j \leq |S|-2$, i.e. $v_j$ is not a complete internal node, then we assign a codeword to the node $v_j$. This means that we assign the codeword to the string $w\in S^{*}$ obtained by the path from the root to the node $v_j$, and we include $w$ in dictionary $\mathcal{D}(t_i)$. Let $\mathcal{D}_{i,j} \subset \mathcal{D}(t_i)$ be the set of parse strings obtained by the paths from the root to all $v_j$ for a fixed $j$ in tree $t_i$.
If $v_j$ is the root of $t_i$ and $j<|S|-i$, we assign a codeword to the root, i.e. we assign the codeword to the empty string $\lambda$, and we add $\lambda$ to $\mathcal{D}_{i,i+j}$. Then, $\mathcal{D}(t_i)$ can be represented as $\mathcal{D}(t_i):= \prod_{j\in \mathcal{I}_{|S|-1}} \mathcal{D}_{i,j}$, $\mathcal{D}_{i,j} \cap \mathcal{D}_{i,j^{\prime}} = \emptyset$ for $j \neq j^{\prime}$.
Note that the root of $t_i$ has $|S|-i$ children at most, and hence the root can have $|S|$ children only in $t_0$.
A source sequence $s=s_1 s_2 ... s_n \in S^{n}$ is encoded by AIVF code with $\{\mathcal{D}(t_i)$ $|$ $i\in\mathcal{I}_{|S|-1}\}$ as follows.
\begin{enumerate}
    \item[(e1)] Use $t_0$ as the current tree for the first parsing of $s$.
    \item[(e2)] Obtain the longest prefix $w$ of $s$ in the current dictionary $\mathcal{D}(t_i)$ by tracing $s$ as long as possible from the root in the current tree $t_i$. Encode the obtained prefix $w$.
    \item[(e3)] If $w$ is included in $\mathcal{D}_{i,j} \subset \mathcal{D}(t_i)$, then use $t_j$ as the current tree for the next parsing. Remove $w$ from the prefix of $s$, and go to $(e2)$ if $s$ is not empty.
\end{enumerate}
Figure \ref{fig:aivf example} shows an example of a  two tree  AIVF code $\{t_0,t_1\}$ using the same parameters as Figure \ref{fig:tunstall example}. This encodes a source sequence ‘${a_1 a_0 a_0 a_0 a_1 a_0 a_0 a_2 a_0 a_2}$’ using the trees in the order $t_0\rightarrow t_0\rightarrow t_1\rightarrow t_1\rightarrow t_0$ and deriving ‘$a_1 a_0$’, ‘$a_0 a_0$’, ‘$a_1 a_0 a_0$’, ‘$a_2 a_0$’, ‘$a_2$’ as the corresponding parsewords.
\newline Now consider the probability $p_{W_i}(w)$ of phrase $w$ from the root to a node in tree $t_i$. From (p1)–(p3) and (e1)–(e3),  when $t_i$ is used to encode $s$ in (e2), the first symbol of $s$ is one of $a_u$, $u = i,...,|S|-1$. Therefore $p_{W_i}(w)$ can be calculated as follows.
\begin{equation} \label{eq:pwi}
     p_{W_i}(w)=
\begin{cases} 
      1 & $if $ w=\lambda, \\
      \frac{p_{W}(w)}{\sum_{u=i}^{|S|-1}p(a_u)} & $otherwise$
   \end{cases}.
\end{equation}
Now  consider the occurrence probability $p_{C_i}(w)$ for a phrase $w\in \mathcal{D}(t_i)$. The codewords are assigned to all leaves and all incomplete nodes in each $t_i$, and $w$ is used by the longest parse rule in the encoding. Then, $p_{C_i}(w)$ is given by
\begin{equation} \label{eq:pci}
    p_{C_i}(w)=p_{W_i}(w)-\sum_{\substack{a\in S s.t. \\ wa \preceq w^{\prime}\in \mathcal{D}(t_i)}} p_{W_i}(wa)
\end{equation}
where $wa\preceq w^{\prime} \in \mathcal{D}(t_i)$ means that $wa$ is a prefix of $w^{\prime}$
and $w^{\prime}\in \mathcal{D}(t_i)$. Then, the average parse length of parse tree $t_i$ , $i \in \mathcal{I}_{|S|-1}$ for source $S$ is given by
\begin{equation}\label{eq:apl}
    E[L_{t_i}(S)]=\sum_{w\in\mathcal{D}(t_i)}p_{C_i}(w)\cdot l(w).
\end{equation}
An  important observation that will be used often is that the height of $t_i$ is at most $D$. This then implies that $E[L_{t_i}(S)] \leq D$ because every parse length in $t_i$ is at most $D$ and therefore the average parse length is also at most $D$.
\newline
In Figure \ref{fig:aivf example} the value in each node is  $p_{W_i}(w)$ of the phrase $w$ derived from the path going from the root to the node; the value $p_{C_i}(w)$ for each $w\in \mathcal{D}(t_i)$ is written in the tables.

Let  $Q$ be the  transition matrix of size $(|S|-1)\cdot(|S|-1)$, whose $(i, j)$-th element is the  conditional probability $q_j(t_i)$, i.e., 
 that $t_j$ is used just after $t_i$ is used. If the prefix $w$ of a source sequence is included in $w \in D_{i,j}$ in the encoding by $t_i$, then $t_j$ is used for the next parsing. Hence, $q_j(t_i)$ is given by
\begin{equation}
    q_j(t_i)=\sum_{w\in D_{i,j}} p_{C_i}(w) \text{ for }  i,j \in \mathcal{I}_{|S|-1}.
\end{equation}

As a transition matrix, $Q$ has a stationary distribution 
$\boldsymbol{\pi}:=(\pi(t_0), \pi(t_1),..., \pi(t_{|S|-2}))$ satisfying $\boldsymbol{\pi}Q=\boldsymbol{\pi}.$

The global average parse length of an AIVF code with $\bold{t}=\{t_i | i \in \mathcal{I}_{|S|-1}\}$ for source S is then given by
\begin{equation}
    E[L_{AIVF}(\bold{t},S)]:=\sum_{i\in \mathcal{I}_{|S|-1}}\pi(t_i)E[L_{t_i}(S)]
\end{equation}
As an example, for the code in Figure \ref{fig:aivf example}
$(E[L_{t_0}],E[L_{t_1}])=(1.8856,2.332)$ and $E[L_{AIFV}(t,S)]=2.10479$ for the example 
which has a better coding rate than the corresponding Tunstall Code illustrated in Figure \ref{fig:tunstall example}.

Let  $\mathcal{T}^{(D)}$ be the set of all possible AIVF codes for $S$ in which each tree has exactly $D$ codewords. The AIVF construction problem is to find a code ${\bf t} \in \mathcal{T}^{(D)}$ that maximizes $E[L_{AIVF}(\bold{t},S)].$

\section{Dynamic Programming  For AIVF} \label{sec: dp for aivf}

Set 
 $\mathcal{T}_{i}$ to be  the set of trees $t_i$ satisfying (p1)-(p3), and $\mathcal{T}_{i}^{(D)}$  the subset of trees in $\mathcal{T}_{i}$ that have exactly $D$ codewords. 
 Let $\x=(x_1,\ldots,x_{|S|-2}) \in \R^{|S|-2}$ and, $\forall t_i\in\mathcal{T}_{i}$,
 define
\begin{equation}
    \Cost\left(t_i,\x\right)=E[L_{t_i}(S)] + \sum_{j=1}^{|S|-2}q_j(t_i)\cdot x_j
\end{equation}
The previous algorithms for finding optimal AIVF codes, required, as subroutines, 
fixing $\x$ and $i\ge 0$ and solving  the local optimization problem of finding the tree $t_i\in\mathcal{T}_{i}^{(D)}$,   that for $i =0$ maximizes 
$\Cost\left(t_i,\x\right)$\
and for $i >0,$ maximizes $\Cost\left(t_i,\x\right) - x_i$. In both cases this is equivalent to  finding the tree that maximizes $\Cost\left(t_i,\x\right)$. Our later algorithm will require solving a slightly generalized version of this problem.

The dynamic programming algorithm we use to solve it is essentially the 
algorithm proposed by Iwata and Yamamoto in ~\cite{AIVF-iterative} which itself was a generalization an extension of the algorithm proposed by Dubé and Haddad ~\cite{dube-haddad} (who only presented/needed it for the case $\x=(0,0,\ldots,0)$)
 rewritten using a  change of variables. Since Iwata and Yamamoto do not give the specific details of their extension and we need a further generalization of that, we provide the full algorithm in the appendix.


\begin{proposition}[\cite{AIVF-iterative}] \label{prop: DP running time}
For any fixed  $i$ and $\x$, dynamic programming can find $t_i\in\mathcal{T}_{i}^{(D)}$ maximizing $\Cost(t_i,\x)$  
 in $O(D^2\cdot |S|)$  time.
\end{proposition}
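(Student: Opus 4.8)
The plan is to rewrite $\Cost(t_i,\x)$ as a sum of purely local contributions attached to the nodes of $t_i$, each of which scales multiplicatively with the reaching‑probability $p_{W_i}$, and then to maximize that sum by a dynamic program whose single ``resource'' is the number of codewords placed in a subtree. First I would put the objective in node‑separable form. Expanding $l(w)$ in \eqref{eq:apl} as the number of edges on the root‑to‑$w$ path and swapping the order of summation, the inner sum becomes $\sum_{w\in\mathcal{D}(t_i):\,ua\preceq w}p_{C_i}(w)=p_{W_i}(ua)$ for every node $ua$ (the stopping probabilities in the subtree of $ua$ sum to its reaching probability), so \eqref{eq:apl} collapses to the identity $E[L_{t_i}(S)]=\sum_{v}p_{W_i}(v)$, the sum taken over all non‑root nodes $v$ of $t_i$. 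Feeding this together with \eqref{eq:pci} into the definition of $\Cost$ gives
\[
\Cost(t_i,\x)=\sum_{v\neq\mathrm{root}}p_{W_i}(v)+\sum_{v\ \mathrm{incomplete}}x_{j(v)}\,p_{C_i}(v),
\]
where $j(v)$ is the number of children of $v$, an \emph{incomplete} node is one with $0\le j(v)\le|S|-2$, and we set $x_0=0$; by \eqref{eq:pci} the codeword term of such a $v$ equals $x_{j(v)}\bigl(p_{W_i}(v)-\sum_a p_{W_i}(va)\bigr)$, involving only $v$ and its children.

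Next I would exploit scaling. By \eqref{eq:pwi}, for any non‑root node $v$ and child $va$ one has $p_{W_i}(va)=p_{W_i}(v)\,p(a)$, so every quantity inside the subtree rooted at $v$ is the product of $p_{W_i}(v)$ and a factor depending only on the \emph{shape} of that subtree, not on where $v$ sits in $t_i$. Accordingly I would define, for a subtree $\tau$ obeying (p2)–(p3), its normalized value $A(\tau)=\bigl(\text{contribution of }\tau\text{ to }\Cost\bigr)/p_{W_i}(\mathrm{root}(\tau))$, and let $A(m)$ be the maximum of $A(\tau)$ over all such $\tau$ with exactly $m$ codewords. Since the admissible internal degrees are $j\in\{1,\dots,|S|-2\}\cup\{|S|\}$ (children via $a_0,\dots,a_{j-1}$) and a leaf has $j=0$, the recursion reads
\[
A(m)=\max_{j}\ \Bigl[x_j\bigl(1-\textstyle\sum_{k=0}^{j-1}p(a_k)\bigr)+\sum_{k=0}^{j-1}p(a_k)\bigl(1+A(m_k)\bigr)\Bigr],
\]
maximized over admissible $j$ and over splits $m=\mathbf{1}\{j\le|S|-2\}+\sum_k m_k$; for a complete node the codeword coefficient vanishes since $1-\sum_{k=0}^{|S|-1}p(a_k)=0$. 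Because every admissible subtree has at least one codeword, each child carries strictly fewer than $m$ codewords, so the recursion is well founded and $A(1),\dots,A(D)$ can be computed in increasing order of $m$ (the height bound $\le D$ guarantees the maxima are attained).

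The only nonroutine point is evaluating the inner maximum without paying a full convolution at every node. I would introduce auxiliary quantities $\Phi(k,m)$, the best value of $\sum_{t=0}^{k-1}p(a_t)(1+A(m_t))$ over $m_0+\cdots+m_{k-1}=m$, filled one child‑symbol at a time by $\Phi(k,m)=\max_{m'}\bigl[\Phi(k-1,m-m')+p(a_{k-1})(1+A(m'))\bigr]$. There are $O(|S|\cdot D)$ entries $\Phi(k,m)$, each computed in $O(D)$ time, and each $A(m)$ then costs $O(|S|)$ given $\Phi$, so the total is $O(D^2|S|)$, matching the claim. Finally I would handle the root of $t_i$ by one more top‑level pass of the same $\Phi$‑type recursion, but over children via $a_i,\dots,a_{|S|-1}$ with the first‑level weights $p(a_u)/\sum_{u'=i}^{|S|-1}p(a_{u'})$ from \eqref{eq:pwi}, the root degree range $1\le j\le|S|-i$ of (p1), and the root's own codeword term weighted by $x_{i+j}$ (as dictated by $\lambda\in\mathcal{D}_{i,i+j}$) when $j<|S|-i$; reading off the entry for exactly $D$ codewords returns the optimal $t_i\in\mathcal{T}_i^{(D)}$.

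The main obstacle is the first reduction together with the scaling observation: one must verify that the ostensibly global quantities $E[L_{t_i}(S)]$ and $q_j(t_i)$ genuinely decompose into per‑node terms that factor through $p_{W_i}(v)$, since this is exactly what makes $A(m)$ position‑independent and reduces the whole optimization to a single‑resource dynamic program. Once that separation is in hand, the remaining care is in the incremental child‑by‑child evaluation that keeps the running time at $O(D^2|S|)$ rather than the $O(D^2|S|^2)$ a naive per‑node convolution would incur, and in correctly accounting for the special symbol set, degree range, and codeword weight of the root.
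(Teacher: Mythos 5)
Your argument is correct, but it takes a genuinely different route from the paper's. The paper's proof (Theorem~\ref{theorem:Tunstal DP} and Algorithm~\ref{alg:dp} in the appendix) never flattens the objective into node-local terms: it keeps $\Cost(t_i,\x)$ intact and shows that under the structural decomposition $s_i^{(d)}=\Tie_i\bigl(s_0^{(l)},s_{i+1}^{(d-l)}\bigr)$ --- peel off the root's $a_i$-child, whose subtree is treated as a type-$0$ tree, leaving a type-$(i+1)$ tree --- the cost obeys the exact recurrence $\Cost\bigl(s_i^{(d)},\x\bigr)=\alpha_i+\alpha_i\Cost\bigl(s_0^{(l)},\x\bigr)+(1-\alpha_i)\Cost\bigl(s_{i+1}^{(d-l)},\x\bigr)$ with $\alpha_i=p(a_i)/\sum_{j\ge i}p(a_j)$. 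This yields a single table $\OPT(i:d)$ with $O(|S|\cdot D)$ entries, each a maximum over one split point $l$, hence $O(D^2|S|)$. You instead first prove the node-separable identity $E[L_{t_i}(S)]=\sum_{v\neq\mathrm{root}}p_{W_i}(v)$ and localize the transition terms at incomplete nodes, then use multiplicativity of $p_{W_i}$ to define a position-independent subtree value $A(m)$, and resolve the multi-way split of codewords among a node's children with the child-prefix table $\Phi(k,m)$. The two are close relatives: your $\Phi(k,m)$ recursion is, up to normalization, exactly the paper's $\OPT(i:d)$ recursion, the paper's insight being that the ``first $k$ children of a root'' structure you build by hand is itself a tree of a type already indexed by the table, so no auxiliary table is needed. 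What your version buys is transparency --- the reduction to a single-resource allocation over subtrees makes it clear why the problem is polynomial and why the $P$-restricted generalization only amounts to restricting the admissible degrees $j$ at incomplete nodes --- at the cost of having to verify the telescoping identity $\sum_{w\in\mathcal{D}(t_i),\,v\preceq w}p_{C_i}(w)=p_{W_i}(v)$ and to treat the root (symbol set $a_i,\dots,a_{|S|-1}$, degree range $1\le j\le|S|-i$, transition weight $x_{i+j}$) as a separate top-level pass, details you do handle. Both analyses correctly arrive at $O(D^2\cdot|S|)$.
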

\textbf{The Generalized Local Optimization  Problem:} \newline
Fix  $P\subseteq \mathcal{I}_{|S|-1}$ such that $0\in P$. Now consider the set 
$\mathcal{T}_{i|P}^{(D)}$ of all trees $t_i\in\mathcal{T}_{i}^{(D)}$  with the further restriction that
$q_j(t_{i})=0$ for every $j\notin P$, i.e., they have transition probability zero 
to  trees with a type not in $P$.

The {\em generalized local optimization problem} is to find $t_i\in\mathcal{T}_{i|P}^{(D)}$ maximizing $\Cost(t_i,\x).$  The reason for needing this will be described later.
We just note here that the dynamic program from Proposition \ref{prop: DP running time} can be easily modified (explained in the appendix) to solve this generalized local optimization problem in $O(D^2\cdot |S|)$  time as well.


\section{The Minimum Cost Markov Chain Problem}
\label{sec: markov chain problem definition}
Define $[m]=\{0,\ldots,m-1 \}.$
Let $\Sbold =  \lp S_0, S_1,\dots,S_{m-1} \rp $ be a Markov chain with $q_j \lp S_k \rp $ being the transition probability of state $S_k$ to $S_j$ and the further technical condition that $\forall k,  q_0 \lp S_k \rp  > 0.$ This Markov chain is ergodic and therefore has a unique stationary distribution $\boldsymbol{\pi}:=(\pi(t_0), \pi(t_1),..., \pi(t_{m-1}))$.

The Markov Chains have  {\em rewards}.  That is, each state $S_k$ has a real-valued {\em reward} or {\em cost} $\cost (S_k).$
The  {\em gain} 
\cite{gallager2011discrete}  or (average steady state) cost of the Markov Chain is  
$\Cost(\Sbold)=\sum_{k\in [m]} \cost(S_k) \cdot \pi_k(\Sbold).$

For intuition, consider the AIVF code ${\bf t}$ as a $m=|S|-1$ state Markov chain, with tree $t_i$ being state $S_i.$ Set $\cost (S_i)$ to be $-E[L_{t_i}(S)]$. Then, the cost of the Markov chain is just $-E[L_{AIVF}(\bold{t}, S)]$. For technical reasons, we need to assume that the cost of each state is non-negative. But in our case, this is not an issue since as mentioned earlier, the average length of each code word is at most $D$. We can therefore add the constant $D$ to all costs, i.e. we can redefine $\Cost(S_i)$ to be $-E[L_{t_i}(S)] + D$, which is non-negative.

Now suppose that, for each $k,$ there is a collection  $\States_k$ of all "permissible"  possible states of type $k$. Their Cartesian product 
$\States = \bigtimes_{k=0}^{m-1} \States_k$ is the set of all permissible Markov Chains.
In the AIVF example,  $\States_k
=\mathcal{T}_{k}^{(D)}$ is the set of every possible $t_k$ trees and $\States=\mathcal{T}^{(D)}$ is the set of all possible AIVF codes.

The  {\em Minimum Cost Markov Chain (MCMC)  problem} is to find a Markov chain in $\States$ with  minimum cost among all Markov Chains in $\States.$   In the AIVF setting, defining $\Cost(S_i)=-E[L_{t_i}(S)] + D$,
 is exactly the problem of finding the optimal AIVF code.




The MCMC problem was originally  motivated by AIFV-$m$ (Almost Instantaneous Fixed to Variable) coding
\cite{iterative_3,iterative_4,aifv_mr,aifv_m,dp_2,iterative_m,iterative_2}
which encodes source characters with $m$ different types of  coding trees and switching between them using a complicated set of rules.
The algorithm developed for finding a minimum cost AIFV-$m$ code was an exponential time iterative one.
As noted in \cite{iterative_4}, this algorithm actually solved the more general
MCMC Problem.  Iwata and  Yamamoto \cite{AIVF-iterative}  then showed that the problem of finding the maximum cost AIVF code was a special case of the MCMC problem (as illustrated above) and used that iterative algorithm to solve the MCMC problem.

Very recently,   \cite{golinaifv-m}  described how to transform the generic MCMC problem into a linear programming (LP) problem. In applications, the associated polytope is described by an exponential number of constraints but the linear program could still be solved in polynomial time using the Ellipsoid method  \cite{ellipsoid} if a local optimization problem on underlying Markov chain states could be solved in polynomial time.

The remainder of the paper describes the conditions required to use the techniques in \cite{golinaifv-m} and then shows how the AIVF problem satisfies those conditions.  This leads to a polynomial time algorithm for finding the maximum cost AIVF code.

\section{More Definitions} 

The next set of definitions are from \cite{golinaifv-m} and stated in terms of Markov Chains and states. They map states to hyperplanes in $\R^m$ and define a polytope.



\begin{definition} \label{hp def} Let $k \in [m].$ In what follows, $\x$ will always satisfy 
        $\x = \left[x_1,x_2\dots,x_{m-1}\right]^T \in \mathbb{R}^{m-1}$.

    \begin{itemize}
        \item Define $f_k: \mathbb{R}^{m-1} \times \States_k \rightarrow \mathbb{R}$ as follows:
        \[
            f_0 \lp \x,S_0 \rp  = \cost \lp S_0 \rp  + \sum_{j=1}^{m-1}q_j \lp S_0 \rp x_j
        \]
        \[
            \forall k > 0: f_k \lp \x,S_k \rp  = \cost \lp S_k \rp  + \sum_{j=1}^{m-1}q_j \lp S_k \rp x_j - x_k
        \]
        \item Define $\g_k : \mathbb{R}^{m-1} \rightarrow \mathbb{R}$ as follows:
        \[
            g_k \lp \x \rp  = \min_{S_k \in \States_k} f_k \lp \x,S_k \rp,\ \  
            S_k \lp \x \rp  = \arg\min_{S_k \in \States_k} f_k \lp \x,S_k \rp,
        \]
        \[
            \Sbold \lp \x \rp  = \bigl( S_0 \lp \x \rp , \ldots, S_{m-1} \lp \x \rp  \bigr).
        \]
      
        $g_k(x)$ is the \textit{lower-envelope of type $k$}.
        \item Define $h: \mathbb{R}^{m-1} \rightarrow \mathbb{R}$ as 
        $
            h \lp \x \rp  = \min_{k} g_k \lp \x \rp.
        $
        
The {\em Markov Chain Polytope} corresponding to $\States$ is 
$$ \MCP = \left\{(\x,y) \in \mathbb{R}^{m} \mid  0 \le  y \le h(\x)\right\}.$$
The {\em height} of $\MCP$ is ${\rm height}(\MCP)=\max_{(\x,y)\in \MCP} .$
%
 %
    \end{itemize}

\end{definition}

The main observation from \cite{golinaifv-m} is the following, which
permitted transforming the MCMC problem into the linear programming one of finding  a highest point on $\MCP.$
\begin{proposition}[Lemma 3.1  and Corollary 3.4 in ~\cite{golinaifv-m}] \label{prop: big paper}
Let $ {\bf S} =(S_0,\ldots,S_{m-1})\in \mbbS$  and 
     $f_k \lp \x,S_i \rp$, $k \in [m],$ its  associated  hyperplanes. Then these $m$ hyperplanes intersect  at a unique point 
    $(\x,y) \in \mathbb{R}^{m}$.
    Such a point  $(\x,y)$  will be called the multi-typed intersection corresponding to $S$.  In addition, $ \Cost({\bf S}) =y \ge {\rm height}(\MCP).$

    Furthermore, there exists a  $ {\bf S}^* =(S^*_0,\ldots,S^*_{m-1})\in \mbbS$
    such that its associated multi-typed intersection point  $(\x^*,y^*) $
    satisfies $ y^* = {\rm height}(\MCP).$  This $\bf S$ is a solution to the MCMC problem and also satisfies $\forall k \in m, f_k(\x^*,S^*_k)={\rm height}(\MCP)$.
 %
 %
\end{proposition}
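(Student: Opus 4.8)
The plan is to prove the three assertions in sequence, using the first to bootstrap the other two, and throughout to exploit that $\Sbold$ with transition matrix $Q=\lp q_j(S_k)\rp$ is ergodic.

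\textbf{Uniqueness of the intersection and the identity $y=\Cost(\Sbold)$.}
Demanding that all $m$ hyperplanes $y=f_k\lp\x,S_k\rp$ meet is, after setting the phantom coordinate $x_0:=0$, exactly the system
\[
y = \cost(S_0) + \sum_{j=1}^{m-1} q_j(S_0)x_j, \qquad y = \cost(S_k) + \sum_{j=1}^{m-1} q_j(S_k)x_j - x_k \ \ (k>0),
\]
which is the average-reward Poisson (Bellman) equation $\cost(S_k)+\sum_j q_j(S_k)x_j = y + x_k$ for the chain, normalized by $x_0=0$. First I would left-multiply by the stationary distribution $\boldsymbol{\pi}$ of $Q$; since $\boldsymbol{\pi}^{\top}Q=\boldsymbol{\pi}^{\top}$ the value terms cancel and force $y=\sum_k \pi_k\cost(S_k)=\Cost(\Sbold)$. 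Uniqueness of $\x$ then follows because $I-Q$ has rank $m-1$ (kernel spanned by the all-ones vector) and the right-hand side is $\boldsymbol{\pi}$-orthogonal, so the solution is unique up to an additive constant, which the normalization $x_0=0$ pins down. This establishes the unique multi-typed intersection and the equality $\Cost(\Sbold)=y$.

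\textbf{The inequality $y\ge\mathrm{height}(\MCP)$.}
Let $(\hat{\x},\hat y)$ attain the height, so $\hat y=h(\hat{\x})$. By the definitions of $h$ and $g_k$, for the states of our fixed chain we have $f_k\lp\hat{\x},S_k\rp\ge g_k(\hat{\x})\ge h(\hat{\x})=\hat y$ for every $k$. I would then take the $\boldsymbol{\pi}$-weighted sum and use $x_0:=0$ together with stationarity $\sum_k\pi_k q_j(S_k)=\pi_j$:
\[
\sum_{k}\pi_k f_k\lp\hat{\x},S_k\rp = \sum_k\pi_k\cost(S_k) + \sum_{j}x_j\Big(\sum_k\pi_k q_j(S_k)\Big) - \sum_k\pi_k x_k = \Cost(\Sbold),
\]
the last two sums telescoping away. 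Since each summand on the left is $\ge\hat y$, this gives $y=\Cost(\Sbold)\ge\hat y=\mathrm{height}(\MCP)$.

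\textbf{Existence of an optimal $\Sbold^*$.}
Each $g_k$ is concave (an infimum of affine maps) and $h=\min_k g_k$ is concave, so its maximum is attained at some $\hat{\x}$ (finiteness of each $\States_k$ in the applications, or coercivity, guarantees attainment). Set $S_k^*=\arg\min_{S_k}f_k\lp\hat{\x},S_k\rp$ and $\Sbold^*=(S_0^*,\dots,S_{m-1}^*)$, with active set $A=\{k: f_k(\hat{\x},S_k^*)=\hat y\}$. Writing the height as the linear program $\max\{\,y: y\le f_k(\x,S_k)\ \forall k,S_k\,\}$, the optimality condition at $(\hat{\x},\hat y)$ produces nonnegative multipliers $\lambda_k$ on the active constraints with
\[
\sum_{k}\lambda_k\,\nabla_{\x} f_k\lp\hat{\x},S^*_k\rp = 0, \qquad \sum_k\lambda_k=1 \ \Longrightarrow\ \boldsymbol{\lambda}^{\top}Q^* = \boldsymbol{\lambda}^{\top},
\]
where $Q^*=\lp q_j(S^*_k)\rp$; the implication uses the explicit gradient $\partial f_k/\partial x_j = q_j(S_k^*)-\delta_{jk}$ and the row-sums $\sum_j q_j(S_k^*)=1$. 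Thus $\boldsymbol{\lambda}$ is a stationary distribution of $\Sbold^*$ supported on $A$, hence by ergodicity equals the unique $\boldsymbol{\pi}^*$. Re-running the telescoping computation of the previous paragraph with weights $\boldsymbol{\pi}^*$ gives $\sum_k\pi^*_k f_k(\hat{\x},S_k^*)=\Cost(\Sbold^*)$; but $\boldsymbol{\pi}^*$ is supported on $A$, where each $f_k(\hat{\x},S_k^*)=\hat y$, so $\Cost(\Sbold^*)=\hat y=\mathrm{height}(\MCP)$. Combined with the second part ($\Cost\ge\mathrm{height}$ for every chain), $\Sbold^*$ solves the MCMC problem, and applying the first part to $\Sbold^*$ yields a unique intersection $(\x^*,y^*)$ with $y^*=\Cost(\Sbold^*)=\mathrm{height}(\MCP)$ and $f_k(\x^*,S_k^*)=y^*$ for all $k$.

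\textbf{Anticipated main obstacle.}
The delicate step is the optimality argument in the third part: $h$ is only piecewise-linear, so I must argue with subgradients rather than gradients, handle possible non-uniqueness of the inner minimizers $S_k^*$, and justify attainment of $\max h$. The crux is showing the dual multipliers genuinely assemble into a stationary distribution; here the standing hypothesis $q_0(S_k)>0$ is indispensable, as it forces a unique recurrent class (containing state $0$) and hence a unique $\boldsymbol{\pi}^*$, which is what legitimizes restricting the weighting to the active set $A$ and identifying $\Cost(\Sbold^*)$ with the height.
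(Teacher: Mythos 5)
First, note that the paper does not prove this proposition at all: it is imported verbatim from \cite{golinaifv-m} (Lemma 3.1 and Corollary 3.4 there), so there is no in-paper argument to compare against. Judged on its own terms, your reconstruction of the first two claims is correct and is the standard argument: the $m$ intersection equations are the Poisson/evaluation equations of the chain, left-multiplying by $\boldsymbol{\pi}$ forces $y=\Cost(\Sbold)$, the rank-$(m-1)$ property of $I-Q$ (which does hold here, since $q_0\lp S_k\rp>0$ makes the chain unichain) gives uniqueness, and the $\boldsymbol{\pi}$-weighted telescoping of $f_k\lp\hat{\x},S_k\rp\ge h(\hat{\x})$ gives $\Cost(\Sbold)\ge{\rm height}(\MCP)$ for every $\Sbold\in\mbbS$.

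The genuine gap is in the third part, and it is exactly the obstacle you flag but do not resolve. Your KKT/duality step produces multipliers indexed by \emph{state--action pairs} $(k,S_k)$, i.e.\ one $\lambda_{k,S_k}$ per constraint $y\le f_k(\x,S_k)$, and the stationarity condition reads $\sum_{k,S_k}\lambda_{k,S_k}\lp q_j(S_k)-\delta_{jk}\rp=0$. Nothing in complementary slackness prevents two distinct active states of the same type $k$ from both carrying positive mass; in that case $\boldsymbol{\lambda}$ is a stationary distribution of a \emph{mixture} of transition rows, and since $\States=\bigtimes_k\States_k$ is a Cartesian product of discrete sets (not closed under convex combinations of rows), this mixture is not the transition matrix $Q^*$ of any permissible chain $\Sbold^*\in\mbbS$. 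Your identity $\boldsymbol{\lambda}^{\top}Q^*=\boldsymbol{\lambda}^{\top}$ therefore does not follow as written, and with it the chain of equalities $\Cost(\Sbold^*)=\sum_k\pi^*_kf_k(\hat{\x},S^*_k)=\hat y$ collapses. This is precisely the classical "deterministic policy extraction" issue in the state--action-frequency LP for average-cost MDPs: to repair it you must take a \emph{basic} (extreme-point) optimal dual solution and prove that at an extreme point each type $k$ with positive total mass carries its mass on a single state, then choose the remaining (zero-mass, i.e.\ transient) coordinates arbitrarily, observing that they do not disturb $\boldsymbol{\lambda}Q^*=\boldsymbol{\lambda}$ and that unichain-ness (from $q_0>0$) upgrades $\boldsymbol{\lambda}$ to \emph{the} stationary distribution $\boldsymbol{\pi}^*$. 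The transient coordinates are also why the cited framework needs the $P$-restricted spaces $\mbbS_{k|P}$ of Definition \ref{def:P_RestrictedX}; their presence in the paper is a signal that this step cannot be waved through. Once that extraction lemma is supplied, the rest of your part three (conclude $\Cost(\Sbold^*)=\hat y$ and then apply part one to relocate to the true intersection point $(\x^*,y^*)$) is sound.
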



To deal with issues arising from possible transient states in the Markov chain, \cite{golinaifv-m} also needed 
\begin{definition}[$P$-restricted search spaces] \ 
\label{def:P_RestrictedX}\\ Set 
$\calP=\{ P\subseteq [m] \mid 0 \in P\}.$

For every state $S,$ define $P(S) = \{j \in [m] \,:\, q_j(S) >0\},$ the set of all states to which $S$ can transition.  Note that  $P(S) \in \calP.$

Now  fix  $k \in[m]$ and $P \in \calP.$
 Define $\mbbS_{k|P}$
to be the subset of all states $S_k\in\mbbS_k$ that only transition to states in $P,$
$$\mbbS_{k|P} = \Bigl\{ S_k \in\mbbS_{k}  \mid   P(S_k) \subseteq P\Bigr\}
\quad{\mbox and}\quad
\mbbS_{|P} = \bigtimes_{i=0}^{m-1} \mbbS_{k|P}.$$

        Further define  $g_{k|P} : \R^{m-1} \to \R$  and
        $S_{k|P} : \R^{m-1} \to\mbbS_{k|P}$
        \begin{eqnarray*}
        g_{k|P}(\x) &=& \min_{S_k \in \mbbS_{k|P}} f_{k}(\x, S_k).\\\
         S_{k|P} (\x) &=& \arg \min_{S_k \in \mbbS_{k|P}} f_{k}(\x, S_k),
        \end{eqnarray*}
        and
        $\Sbold_{|P} (\x)=  
        \left(S_{0|P}(\x),S_{1|P}(\x) \ldots,S_{m-1|P} (\x)
        \right).
        $


\end{definition}

The final definition needed from \cite{golinaifv-m} is:
\begin{definition}
Let ${\bf R} = \bigtimes_{k \in [m]} [l_k,r_k]$ be a hyperrectangle in $\R^{m-1}.$  Define  
$t'_{\States}({\bf R})  $ to be the maximum time required to calculate $ \Sbold_{|P}(\x)$ for any $\x\in {\bf R} $ and any $P \in \mathcal{P}.$ 
%
%
\end{definition}
While this might look mysterious, in the AIVF case calculating $\Sbold_{\x|[m]}(\x)$
is {\em exactly} the local optimization problem defined in 
Section \ref{sec: dp for aivf}.  Calculating $\Sbold_{\x|P}$ is the generalized local optimization problem described there.  Thus, for AIVF, 
$t'_{\States}({\bf R}) =O(D^2 |S|)$ independent of  $\bf R.$ For context, \cite{golinaifv-m} uses this to construct a {\em separation oracle} for $\MCP,$ something which is needed in the Ellipsoid method.

After introducing the framework, we can now state the major result from 
\cite {golinaifv-m} that we will use.
\begin{lemma}{\rm \cite[Lemma 4.6]{golinaifv-m}} 
\label{lem:newsol}
Given $\States,$ let $b'$ be the maximum number of bits required to write any transition probability  $q_i(S)$ or cost $\cost(S)$ of a permissible state $S$.

Furthermore,  assume  some known  hyper-rectangle  
${\bf R}$ with the property that there exists   $(\x^*,y^*) \in \mathbb{H}$ satisfying $\x^* \in {\bf R}$ and 
$y^* = {\rm height}(\MCP).$

%

Then the minimum-cost Markov chain problem can be solved in time polynomially bounded by  $m$, $b'$  and $t'_S({\bf R}).$

\end{lemma}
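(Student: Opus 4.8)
The plan is to recast the MCMC problem as the linear program $\max y$ subject to $(\x,y)\in\MCP$, and to solve this program with the Ellipsoid method using the (generalized) local‑optimization routine as a separation oracle. First I would record why finding a highest point of $\MCP$ suffices. By Proposition~\ref{prop: big paper} the optimal value of this LP is ${\rm height}(\MCP)$, which equals the minimum cost, and there is a chain $\Sbold^*$ whose multi‑typed intersection $(\x^*,y^*)$ attains it with $f_k(\x^*,S^*_k)=y^*$ for every $k$. Since always $g_k(\x^*)\ge h(\x^*)=y^*$ while $g_k(\x^*)\le f_k(\x^*,S^*_k)=y^*$, we get $g_k(\x^*)=y^*$ for all $k$; hence once $\x^*$ is known the optimal chain is simply $\Sbold(\x^*)$, obtained by one further call to the local optimizer. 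So it suffices to locate $\x^*$.

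Next I would spell out the LP and its oracle. Writing the feasibility constraints as $y\ge 0$ and $y\le f_k(\x,S_k)$ for all $k\in[m]$ and all $S_k\in\States_k$, each constraint is affine in $(\x,y)$ with coefficients built from the transition probabilities $q_j(S_k)$ and cost $\cost(S_k)$, hence of bit‑length $O(b')$. There are exponentially many constraints, but a separation oracle follows from the local optimization: given $(\x,y)$ with $\x\in{\bf R}$, I compute $h(\x)=\min_k g_k(\x)$ together with the minimizers $S_k(\x)$; if $0\le y\le h(\x)$ the point is feasible, and otherwise either $y<0$ or the minimizing type $k$ supplies the violated hyperplane $y\le f_k(\x,S_k(\x))$. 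Each such query costs $O\!\left(m\cdot t'_{\States}({\bf R})\right)$, and the $P$‑restricted evaluations $\Sbold_{|P}(\x)$ are exactly what make this oracle correct when the chain has transient states.

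With the oracle in place I would run the Ellipsoid method, initializing with the given box ${\bf R}$ (guaranteed to contain $\x^*$) widened in the $y$ direction, and maximize $y$ by the standard reduction of optimization to a sequence of feasibility tests (a sliding‑objective Ellipsoid). To turn the approximately optimal iterate into the exact vertex, I would bound the arithmetic complexity of $\MCP$: every vertex is the common intersection of $m$ of the $O(b')$‑bit hyperplanes $f_k$, so by Cramer's rule its coordinates have bit‑length polynomial in $m$ and $b'$. This vertex bound, together with the diameter of ${\bf R}$, simultaneously controls the number of Ellipsoid iterations, the working precision, and the rounding of the final iterate to $(\x^*,y^*)$. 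Putting the pieces together yields a running time polynomial in $m$, $b'$, and $t'_{\States}({\bf R})$.

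The hard part is not the gross running time but the exact recovery. The Ellipsoid method only delivers a near‑optimal $\x$, so the delicate steps are (i) making the polynomial vertex‑bit‑bound tight enough to round unambiguously to $\x^*$, and (ii) guaranteeing that the chain read off at the rounded optimum is a genuine minimum‑cost chain rather than one whose intersection value is inflated by transient states. Step (ii) is precisely why the oracle and the recovery are phrased through the $P$‑restricted spaces $\mbbS_{k|P}$: at each queried point one must identify the correct recurrent support $P$ and optimize within $\mbbS_{k|P}$, and verifying that this restricted search still certifies ${\rm height}(\MCP)$ is the technical core of the argument.
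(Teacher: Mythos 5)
You should first note that the paper does not actually prove this statement: Lemma~\ref{lem:newsol} is imported verbatim as Lemma~4.6 of \cite{golinaifv-m}, and the present paper only verifies its \emph{hypotheses} for the AIVF instance (via Lemma~\ref{lem:bits}, Lemma~\ref{lem: bounding lemma} and Theorem~\ref{thm:final}). So there is no in-paper proof to compare against. That said, your architecture matches the route the paper attributes to \cite{golinaifv-m} and sketches in Sections~III--V: recast MCMC as maximizing $y$ over $\MCP$, justify this by Proposition~\ref{prop: big paper}, build a separation oracle from the local optimizations $\Sbold_{|P}(\x)$ (each violated constraint $y \le f_k(\x', S_k(\x))$ is indeed valid for all of $\MCP$ and violated at the query point), run a sliding-objective Ellipsoid method inside ${\bf R}$, and round using a Cramer's-rule bound on vertex bit-length in terms of $m$ and $b'$. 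Your observation that $g_k(\x^*)=y^*$ for every $k$, so that the optimal chain can in principle be read off at $\x^*$ by one more oracle call, is also consistent with the last clause of Proposition~\ref{prop: big paper}.

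The gap is the one you yourself flag and then do not close: the role of the $P$-restricted spaces $\mbbS_{k|P}$. You assert that the recovered chain might have "an intersection value inflated by transient states" and that one must "identify the correct recurrent support $P$ and optimize within $\mbbS_{k|P}$," but you give no procedure for finding that $P$ (there are $2^{m-1}$ candidates in $\calP$, so enumeration is not polynomial) and no argument that the restricted optimum still certifies ${\rm height}(\MCP)$. Since $\Cost({\bf S})=\sum_k \cost(S_k)\pi_k$ ignores states with $\pi_k=0$, the unrestricted argmin $\Sbold(\x^*)$ need not be the chain whose gain you have actually bounded, and repairing this is precisely the content of Lemma~4.6 in \cite{golinaifv-m} rather than a routine detail. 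A secondary, smaller omission: the Ellipsoid machinery needs the rationality/precision bookkeeping you mention to be made quantitative (an explicit polynomial bound on vertex denominators and on the number of iterations as a function of $m$, $b'$ and the diameter of ${\bf R}$) before "round unambiguously to $\x^*$" is legitimate; as written this is a promissory note rather than a proof step.
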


\section{The Main Result}  \label{sec: statements}
We already saw how the AIVF coding problem can be recast as an MCMC problem.
To solve it in polynomial time it only remains to show that the conditions of  Lemma \ref{lem:newsol} apply.
Note that we have already seen that, independent of the specific parameters of $\bf R,$ using the generalized dynamic programming algorithm,  $t'_S({\bf R})= O(D^2 |S|)$.

Now suppose that  
each $p(a_i),$
$i\in \mathcal{I}_{|S|}$
 can be written using at most
$b$ bits.  This implies that 
$p(a_i) = P_i 2^{-b}$
where  $P_i$ is an integer satisfying $0 \le P_i \le 2^b.$
It is then straightforward to bound the number of bits needed in encoding the problem.
More specifically, we  prove (proof in the appendix) that
\begin{lemma} \label{lem:bits}
In the MCMC formulation of the AIVF coding problem let $S_k \in\States_k$ be an arbitrary state.  Then 
each  $q_j(t_i)$ and  $\cost(S_i)$ can be encoded using at most $O(D b)$ bits. 
\end{lemma}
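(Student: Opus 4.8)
The plan is to show that every quantity in the MCMC formulation is a rational number whose numerator and denominator can both be written over a single common denominator of at most $O(Db)$ bits, and whose numerator stays bounded because the underlying quantities (probabilities and average lengths) are themselves bounded. First I would record the two elementary facts that drive everything. Since $p(a_i)=P_i 2^{-b}$ with $P_i$ an integer in $[0,2^b]$ and $\sum_i P_i = 2^b$, each normalizing denominator $\sum_{u=i}^{|S|-1}p(a_u)$ equals $Q_i 2^{-b}$, where $Q_i=\sum_{u=i}^{|S|-1}P_u$ is an integer with $0\le Q_i\le 2^b$. And, because the height of $t_i$ is at most $D$, every phrase $w$ appearing in $t_i$ satisfies $l(w)\le D$.

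Next, for a phrase $w=c_1\cdots c_n$ with $n\le D$, I would write $p_W(w)=\bigl(\prod_{r=1}^n P_{c_r}\bigr)\cdot 2^{-nb}$, so its numerator is a product of at most $D$ factors each at most $2^b$, hence at most $2^{Db}$ and representable in $Db$ bits. Dividing by the normalizer gives $p_{W_i}(w)=\bigl(\prod_r P_{c_r}\bigr)/(Q_i 2^{(n-1)b})$, and likewise $p_{W_i}(wa)$ has denominator $Q_i 2^{nb}$ with $n\le D-1$. Since both exponents are at most $(D-1)b$, both denominators divide $Q_i 2^{(D-1)b}$, which has at most $Db$ bits. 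Thus every $p_{W_i}$-value occurring in $t_i$ can be written as $M/(Q_i 2^{(D-1)b})$ for an integer $M$, and since $p_{W_i}\le 1$ the numerator satisfies $0\le M\le Q_i 2^{(D-1)b}\le 2^{Db}$, i.e. it needs at most $Db$ bits.

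With this common denominator in hand, the remaining quantities follow by bookkeeping. Because $p_{C_i}(w)$ is a difference of $p_{W_i}$-values (Equation \eqref{eq:pci}), it is automatically an integer multiple of $1/(Q_i 2^{(D-1)b})$, and being a probability in $[0,1]$ its numerator again fits in $Db$ bits. Then $q_j(t_i)=\sum_{w\in\mathcal{D}_{i,j}}p_{C_i}(w)$ is a sum of at most $D$ such terms (the dictionary has $D$ codewords) and is itself at most $1$, so its numerator is still at most $Q_i 2^{(D-1)b}\le 2^{Db}$; hence $q_j(t_i)$ needs $O(Db)$ bits. Finally $E[L_{t_i}(S)]=\sum_w p_{C_i}(w)\,l(w)$ multiplies each numerator by $l(w)\le D$ and sums at most $D$ terms, but since $E[L_{t_i}(S)]\le D$ the resulting numerator over the same denominator is at most $D\cdot 2^{Db}$, adding only $O(\log D)$ bits; the cost $\cost(S_i)=D-E[L_{t_i}(S)]$ then keeps the same denominator and an $O(Db)$-bit numerator. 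I expect the only real subtlety to be the bookkeeping that the single denominator $Q_i 2^{(D-1)b}$ simultaneously clears every phrase length $n\le D$ present in the tree; once the facts $l(w)\le D$ and $|\mathcal{D}(t_i)|=D$ are used, together with the a priori bounds $p_{W_i}\le 1$ and $E[L_{t_i}(S)]\le D$ to control numerator growth, the $O(Db)$ bit bounds are immediate.
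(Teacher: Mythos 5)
Your proof is correct and follows essentially the same route as the paper's: bound the bits of $p_W(w)$ using $l(w)\le D$, then propagate through $p_{W_i}(w)$, $p_{C_i}(w)$, $q_j(t_i)$ and $E[L_{t_i}(S)]$, using the a priori bounds $p_{W_i}\le 1$ and $E[L_{t_i}(S)]\le D$ to control the numerators. If anything, your explicit common denominator $Q_i 2^{(D-1)b}$ is more careful than the paper's assertion that $p_{W_i}(w)$ has the dyadic form $P'' 2^{-bD-1}$ (which is not literally true, since the normalizer $\sum_{u\ge i}p(a_u)$ need not be a power of two), but both arguments yield the same $O(Db)$ bound.
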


It only remains to exhibit a bounded  hyper-rectangle $\bf R$ that contains
 $(\x^*,y^*) \in \mathbb{H}$ where  
$y^* = {\rm height}(\MCP).$




First, we prove a Lemma (proof in the appendix) that introduces a bounding box containing the highest point of $\MCP$ if all probabilities and the cost function $\cost$ are bounded.

\begin{lemma} \label{lem: bounding lemma}
Set $\langle n\rangle =\{1,\ldots n\}.$
Let 
    $\States_i$ be as defined in the MCMC problem and further assume that  $\cost(S_i)$ is bounded by a positive real number $N$ for all $S_k \in \States_i.$ Moreover assume $q_0(S_k) \geq \beta$ for all $ k \in \langle m-1 \rangle$ and $S_k \in \States_k$. If $C \geq \beta^{-1} N$, then for all $i \in [m-1]$ the following statement holds:

    If $\x = [x_1,x_2,\dots,x_{m-1}]^T \in \R^{m-1}$ satisfies  $x_j \in [-C, C]$ for all $j \in \langle m-1\rangle \setminus \{i\}$, we have:

    \begin{enumerate}
        \item If $x_i > C$ then $f_i \lp \x , S_i \rp < 0;$
        \item If $x_i < -C$ then $f_i \lp \x , S_i \rp > N.$
    \end{enumerate}
\end{lemma}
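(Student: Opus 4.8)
The plan is to expand $f_i(\x,S_i)$ for $i\in\langle m-1\rangle$, isolate the contribution of the single coordinate $x_i$ that is allowed to leave $[-C,C]$, and exploit the fact that the transition probabilities out of a fixed state form a probability distribution, i.e. $\sum_{j=0}^{m-1} q_j(S_i)=1$. The whole argument rests on one observation: since $q_0(S_i)\ge\beta$, the effective coefficient of $x_i$ in $f_i$ is $q_i(S_i)-1$, which satisfies $1-q_i(S_i)\ge q_0(S_i)\ge\beta>0$ and is therefore strictly negative and bounded away from $0$. That negativity is exactly what forces $f_i$ to be large when $x_i$ is very negative and small when $x_i$ is very positive. (I note at the outset that for the index $i=0$ the term $-x_i$ and the variable $x_i$ are absent, so attention is restricted to $i\in\langle m-1\rangle$.)

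First I would rewrite, using the definition of $f_i$ for $i>0$,
\[
f_i(\x,S_i)=\cost(S_i)+\left(\sum_{j=1}^{m-1}q_j(S_i)\,x_j-x_i\right),
\]
and bound the parenthesized quantity, since $\cost(S_i)\in[0,N]$. For part (1), suppose $x_i>C>0$ and $|x_j|\le C$ for $j\neq i$. Because the coefficient $q_i(S_i)-1$ of $x_i$ is strictly negative, replacing $x_i$ by the smaller value $C$ strictly increases the bracket, so
\[
\sum_{j=1}^{m-1}q_j(S_i)\,x_j-x_i<\left(\sum_{j=1}^{m-1}q_j(S_i)-1\right)C=-q_0(S_i)\,C\le-\beta C\le-N,
\]
where the equality uses the normalization $\sum_{j=1}^{m-1}q_j(S_i)=1-q_0(S_i)$ and the last step uses $C\ge\beta^{-1}N$. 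Combined with $\cost(S_i)\le N$ this yields $f_i(\x,S_i)<0$.

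Part (2) is the mirror image. For $x_i<-C<0$ the same sign considerations let me replace $x_i$ by the larger value $-C$, which strictly decreases $-x_i$'s negative contribution and gives the reverse bound $\sum_{j=1}^{m-1}q_j(S_i)\,x_j-x_i>q_0(S_i)\,C\ge\beta C\ge N$; together with $\cost(S_i)\ge 0$ this gives $f_i(\x,S_i)>N$. The only real obstacle is bookkeeping: I must track signs carefully when combining $x_i>C$ (resp. $x_i<-C$) with the negative coefficient, and I must notice the telescoping $\left(\sum_{j=1}^{m-1}q_j(S_i)-1\right)=-q_0(S_i)$, which collapses all the off-diagonal terms and the $-x_i$ term into a single clean multiple of $q_0(S_i)$. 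Since both parts hinge on $q_0(S_i)\ge\beta$, I would record that inequality once and reuse it in each case.
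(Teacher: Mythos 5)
Your proof is correct and follows essentially the same route as the paper's: bound the $j\neq i$ terms by $\pm C$, use the normalization $\sum_{j=0}^{m-1}q_j(S_i)=1$ to collapse everything to $\pm q_0(S_i)C$, and finish with $q_0(S_i)\ge\beta$ and $C\ge\beta^{-1}N$. The only cosmetic difference is that you substitute $x_i\mapsto\pm C$ in the same step as the other coordinates (justified by the strictly negative coefficient $q_i(S_i)-1$), whereas the paper first isolates the $-(1-q_i(S_i))x_i$ term and then compares; your observation that $1-q_i(S_i)\ge q_0(S_i)\ge\beta>0$ correctly supplies the strictness the paper uses implicitly.
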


\begin{remark}\label{rmk: bound beta}
    For the AIVF-m Markov  and its associated Markov Chain, we have  
    $0 \le \Cost(S_i)=-E[L_{t_i}(S)] + D\le D$. So in Lemma \ref{lem: bounding lemma}  we may assume that $N =D.$

Further note that every tree $t_i$ has to contain a leaf node, which has a codeword associated with it. This immediately implies that 
 $q_0(S_i) > 0$ for for all $S_i$,  Lemma \ref{lem:bits}  then implies that 
$q_0(S_i) \ge 2^{-c b D}$ for some constant $c >0.$ So, in Lemma \ref{lem: bounding lemma},  we may assume that $\beta=2^{-c b D}$.
\end{remark}



We need one final Theorem.
\begin{theorem} \label{thm:final}
Let ${\bf S}^*$ be the solution to the MCMC problem and $Q = (\x^*, h^*)$ its associated multitype intersection point as given by 
Proposition \ref{prop: big paper}.
Then 
    $\x^*\in [-\beta^{-1}N, \beta^{-1}N]^{m-1}$.
\end{theorem}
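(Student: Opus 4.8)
The plan is to locate the optimal multitype intersection point $(\x^*, h^*)$ using the structural facts guaranteed by Proposition \ref{prop: big paper}, and then to pin down each coordinate of $\x^*$ by examining its largest and smallest coordinates separately. Two facts will be extracted from Proposition \ref{prop: big paper}: that $f_k(\x^*, S_k^*) = h^*$ for every $k \in [m]$, and that $h^* = \Cost({\bf S}^*)$. Combining the latter with the standing hypotheses (from Lemma \ref{lem: bounding lemma} and Remark \ref{rmk: bound beta}) that $0 \le \cost(S_k) \le N$ for all states, together with $\Cost({\bf S}^*) = \sum_k \cost(S_k^*)\,\pi_k$ where $\boldsymbol{\pi}$ is a probability distribution, immediately yields the two-sided bound $0 \le h^* \le N$. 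These bounds on $h^*$ are what the coordinate argument plays against.

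For the upper bound on the coordinates, I would let $i^*$ be an index achieving $\max_{1 \le j \le m-1} x^*_j$ and argue by contradiction that $x^*_{i^*} \le \beta^{-1} N$. Assuming $x^*_{i^*} > \beta^{-1}N$, I expand $f_{i^*}(\x^*, S_{i^*}^*)$, use that the transition probabilities out of $S^*_{i^*}$ sum to $1$, and bound each off-diagonal $x^*_j$ by $x^*_{i^*}$ (valid since $q_j \ge 0$). After this substitution the coefficients of $x^*_{i^*}$ collapse, leaving $f_{i^*}(\x^*, S_{i^*}^*) \le \cost(S_{i^*}^*) - q_0(S_{i^*}^*)\, x^*_{i^*} \le N - \beta\, x^*_{i^*} < 0$, using $q_0 \ge \beta$ and $x^*_{i^*} > \beta^{-1}N$. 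Since $f_{i^*}(\x^*, S_{i^*}^*) = h^* \ge 0$, this is a contradiction, so every coordinate satisfies $x^*_j \le \beta^{-1}N$. The lower bound is entirely symmetric: take $i_*$ achieving $\min_{1 \le j \le m-1} x^*_j$, assume $x^*_{i_*} < -\beta^{-1}N$, bound each $x^*_j$ from below by $x^*_{i_*}$, and derive $f_{i_*}(\x^*, S_{i_*}^*) \ge -q_0(S_{i_*}^*)\, x^*_{i_*} > N$, contradicting $f_{i_*}(\x^*, S_{i_*}^*) = h^* \le N$. Together these give $\x^* \in [-\beta^{-1}N, \beta^{-1}N]^{m-1}$.

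The step I expect to be the main obstacle is the very first move of the coordinate argument, because Lemma \ref{lem: bounding lemma} cannot be cited verbatim here. That lemma controls a single coordinate $x_i$ only under the standing hypothesis that all the \emph{other} coordinates already lie in $[-C, C]$, which is precisely what is not yet known about $\x^*$; and rescaling to $C = \max_j |x^*_j|$ fails because it lands exactly on the boundary case $x_i = C$, where the lemma's strict conclusion does not fire. The resolution is to run the estimate on the extremal coordinate rather than an arbitrary one: for $i^*$ the index of the largest coordinate the inequalities $x^*_j \le x^*_{i^*}$ hold for free, and these are the only comparisons the proof of Lemma \ref{lem: bounding lemma} actually uses (with $x^*_{i^*}$ playing the role that the uniform bound $C$ plays there). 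I would therefore reprove the relevant estimate inline for the extremal coordinate instead of invoking the lemma as a black box; the computation is identical to that of Lemma \ref{lem: bounding lemma} with $C = \beta^{-1}N$, only with $x^*_{i^*}$ substituted for $C$ on the remaining coordinates. Once this substitution is justified, the two contradictions above are routine.
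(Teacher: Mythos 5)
Your proof is correct and follows essentially the same route as the paper's: bound $0 \le h^* = \Cost({\bf S}^*) \le N$ by convexity of the stationary average, then derive a contradiction at an extremal coordinate of $\x^*$ using the leverage $q_0(S^*_i) \ge \beta$. Your worry about citing Lemma \ref{lem: bounding lemma} verbatim is well-founded --- the paper does invoke it with $C = \max_j |x^*_j|$ and $|x^*_i| = C$, which sits exactly on the boundary where the lemma's hypotheses ($x_i > C$ or $x_i < -C$) are not literally satisfied --- and your inline redo of the estimate at the extremal coordinate, where the strict inequality $C > \beta^{-1}N$ supplies the needed strictness via $q_0\, x^*_i > N$, is the correct repair of that minor imprecision.
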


\begin{proof}
First note that by Proposition \ref{prop: big paper}, 
$\Cost({\bf S}^*) = y^* = {\rm height}(\MCP).$ 
Since  $0\le \cost(S^*_i)\le N$ for all $i \in [m]$ and 
$\Cost({\bf S}^*)$ is some  convex combination of the $\cost(S^*_i)$, we derive
$0\le \Cost({\bf S}^*)\le N$.

Let $\x^* = [x^*_1,x^*_2,\ldots, x^*_{m-1}]^T$ and set  $C = \max \{ |x^*_1| , \ldots, |x^*_{m-1}| \}$. If $\x^* \notin [-\beta^{-1}N, \beta^{-1}N]^{m-1}$, then there exists index $i \in \langle m-1 \rangle$ s.t. 
 \[
    |x^*_i| =  C > \beta^{-1}N
 \]
 but 
  $x^*_j \in [-C, C]$ for all $j \in \langle m-1 \rangle \setminus \{i\}$. Thus, by Lemma \ref{lem: bounding lemma}, one of the following two cases must occur.
 \begin{enumerate}
     \item $f_i \lp \x^*, S^*_i \rp < 0$
     \item $f_i \lp \x^*, S^*_i \rp > N$
 \end{enumerate}
 But, again by  Proposition \ref{prop: big paper},  $f_i \lp \x^*, S^*_i \rp= \Cost({\bf S}^*),$ leading to a contradiction.
\end{proof}


    



We may therefore apply Lemma \ref{lem:newsol} with $b' = O(Db)$ 
(Lemma \ref{lem:bits}) and ${\bf R} = [-D 2^{-bcD},D 2^{-bcD}]^{m-1}$
(Remark \ref{rmk: bound beta}  and  Theorem \ref{thm:final})
deriving that a maximum cost AIVF code can be constructed in time
polynomial in $m=|S|-1$, $b'=O(Db)$ and $t'_S({\bf R})= O(D^2 |S|)$, i.e.,  polynomial in $|S|$, $D$,  $b.$

\section*{Acknowledgment}
Work of all authors partially supported by RGC CERG Grant 16212021.

\newpage

\ 
\medskip

\IEEEtriggeratref{4}



\bibliography{refrence}

\newpage
\

\appendix
\subsection{Dynamic Programming for AIVF:}


This section provides more detail about the dynamic programming algorithms for AIFV that were discussed in Section \ref{sec: dp for aivf}.

In order to find optimal AIVF codes we are required to solve a local optimization problem. The dynamic programming algorithm we use is the same algorithm proposed by Iwata and Yamamoto in \cite{AIVF-iterative} which is an extension of the algorithm proposed by Dubé and Haddad ~\cite{dube-haddad} but has been rewritten using change of variables to permit it to fit into the MCMC format required by Lemma \ref{lem:newsol}.

Since  \cite{AIVF-iterative}didn't   provide full details of their modification and we require  a generalization of it, we provide the full algorithm with analysis here.

\begin{algorithm} 
\caption{Fills in the $\OPT(i:d)$ table and constructs the $t^{(i)}_d$ satisfying $\Cost\left(t^{(d)}_i\right)=\OPT(i:d)$ }\label{alg:dp}
\begin{algorithmic}[1]
\Require $D \geq 2$,  $\{\alpha_i|i \in \mathcal{I}_{|S|-1}\}$ and $\x=(x_1,\ldots,x_{|S|-2})$.\\
\textbf{Ensure:} $t_i^{(D)}$ and $\OPT(i:D)$ satisfy $t_i^{(D)}:=\argmax\limits_{s_i^{(D)} \in \mathcal{T}_{i}^{(D)}} \Cost\left(s^{(D)}_i\right)$ and $\OPT(i:D)=\Cost\left(t^{(D)}_i\right)$ respectively for $i\in \mathcal{I}_{|S|-1}$.
\algrenewcommand\algorithmicrequire{\textbf{Initialization:}}
\Require $t_i^{(1)} \gets $Root and $\OPT(i:1) \gets 0 $ for $ i\in{I}_{|S|-1}\ $.
\For{$d=2$ to $D$ }
\For{$i=0$ to $|S|-3$ }
\State  $\OPT(i:d) \gets  \max\limits_{1 \le l < d-1}
            \Bigl( \alpha_i + \alpha_i \cdot \OPT(0:l)+(1- \alpha_i) \cdot \OPT(i+1:d-l) \Bigr)$ \label{line:opt update}
\State  \hspace*{0.56in}$l' \gets  \argmax\limits_{1 \le l < d-1}
            \Bigl( \alpha_i + \alpha_i \cdot \OPT(0:l)+(1- \alpha_i) \cdot \OPT(i+1:d-l) \Bigr)$         \label{line:maximum l}
\State   $t^{(d)}_i \gets \Tie_i\left( t^{(l')}_0, t^{(d-l')}_{i+1} \right)$ \label{line:t_i update}           
\EndFor
\State  $\OPT(|S|-2:d) \gets  \max\limits_{1 \le l < d-1}
            \Bigl( 1 + \alpha_{|S|-2} \cdot \OPT(0:l)+(1- \alpha_{|S|-2}) \cdot \OPT(0:d-l) \Bigr)$
\State  \hspace*{0.93in}$l' \gets  \argmax\limits_{1 \le l < d-1}
            \Bigl( 1 + \alpha_{|S|-2} \cdot \OPT(0:l)+(1- \alpha_{|S|-2}) \cdot \OPT(0:d-l) \Bigr)$     
\State   $t^{(d)}_{|S|-2} \gets \Tie_i\left( t^{(l')}_0, t^{(d-l')}_{0} \right)$           \label{line:second t_i update} 
\EndFor \\
\Return $\left\{\OPT(i:D), t^{(D)}_i \mid i \in\mathcal{I}_{|S|-1}\right\}$
\end{algorithmic}
\end{algorithm}

\begin{figure}
    \centering
    \tikzset{every picture/.style={line width=0.75pt}} 

\begin{tikzpicture}[x=0.75pt,y=0.75pt,yscale=-1,xscale=1]

\draw   (113.07,74.05) -- (155.62,137.81) -- (113.07,137.81) -- cycle ;
\draw   (75.02,88.27) -- (93.04,136.37) -- (57,136.37) -- cycle ;
\draw  [line width=3] [line join = round][line cap = round] (75.02,89.23) .. controls (75.02,89.23) and (75.02,89.23) .. (75.02,89.23) ;
\draw  [line width=3] [line join = round][line cap = round] (113.07,74.05) .. controls (113.07,74.05) and (113.07,74.05) .. (113.07,74.05) ;
\draw    (75.02,88.27) -- (113.07,74.05) ;
\draw   (253.72,92.28) -- (278,144.88) -- (229.44,144.88) -- cycle ;
\draw   (195.44,92.28) -- (219.72,144.88) -- (171.16,144.88) -- cycle ;
\draw  [line width=3] [line join = round][line cap = round] (195.73,91.44) .. controls (195.73,91.44) and (195.73,91.44) .. (195.73,91.44) ;
\draw  [line width=3] [line join = round][line cap = round] (254,92.28) .. controls (254,92.28) and (254,92.28) .. (254,92.28) ;
\draw  [line width=3] [line join = round][line cap = round] (226.58,71.4) .. controls (226.58,71.4) and (226.58,71.4) .. (226.58,71.4) ;
\draw    (195.44,92.28) -- (226.58,71.4) ;
\draw    (226.58,71.4) -- (253.72,92.28) ;

\draw (85.04,62.89) node [anchor=north west][inner sep=0.75pt]    {$a_{i}$};
\draw (64.27,112.53) node [anchor=north west][inner sep=0.75pt]    {$s_{0}^{( l)}$};
\draw (116.33,114.13) node [anchor=north west][inner sep=0.75pt]    {$s_{i+1}^{( r)}$};
\draw (53.81,154.67) node [anchor=north west][inner sep=0.75pt]    {$Tie_{i}\left( s_{0}^{( l)} ,\ s_{i+1}^{( r)}\right)$};
\draw (182.66,116.66) node [anchor=north west][inner sep=0.75pt]    {$s_{0}^{( l)}$};
\draw (243.36,117.5) node [anchor=north west][inner sep=0.75pt]    {$s_{0}^{( r)}$};
\draw (172.01,64.47) node [anchor=north west][inner sep=0.75pt]    {$a_{\mid S\mid-2}$};
\draw (235.29,61.64) node [anchor=north west][inner sep=0.75pt]    {$a_{\mid S \mid-1}$};
\draw (172.86,156.25) node [anchor=north west][inner sep=0.75pt]    {$Tie_{\mid S \mid-2}\left( s_{0}^{( l)} ,\ s_{0}^{( r)}\right)$};

\end{tikzpicture}
    \caption{On the left-hand we can see an Illustration of how the trees in $\mathcal{T}_i$ can be decomposed for $i\in\mathcal{I}_{|S|-2}$ and on the right-hand side we can see how a the trees in $\mathcal{T}_{|S|-2}$ can be decomposed}
    \label{fig:tree dissection}
\end{figure}
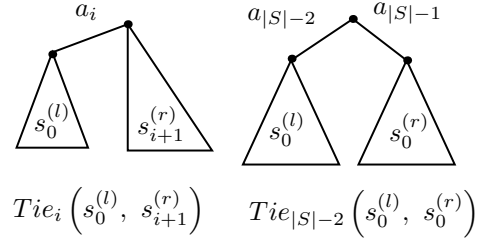

As before,  $\mathcal{T}_{i}$ denotes the set of trees $t_i$ satisfying (p1)-(p3), and $\mathcal{T}_{i}^{(D)}$ is the set of trees that are in $\mathcal{T}_{i}$ that have exactly $D$ codewords. Consider $\x=(x_1,\ldots,x_{|S|-2}) \in \R^{|S|-2}$. We define 
\begin{equation}
    \Cost\left(t_i,\x \right)=E[L_{t_i}(S)] + \sum_{j=1}^{|S|-2}q_j(t_i)\cdot x_j
\end{equation}
for $t_i\in\mathcal{T}_{i}$.

In order to find optimal AIVF codes we are required to solve a local optimization problem that involves finding the tree $t_0 \in \mathcal{T}_{0}^{(D)}$ that maximizes $E[L_{t_0}(S)] + \sum_{j=1}^{|S|-2}q_j(t_0)\cdot x_j$ and the tree $t_i \in \mathcal{T}_{i}^{(D)}$ that maximizes $E[L_{t_i}(S)] + \sum_{j=1}^{|S|-2}q_j(t_i)\cdot x_j-x_i$ for every $i>0$. Since the $x_i$s are fixed this is equivalent to finding the tree $t_i \in \mathcal{T}_{i}^{(D)}$ that maximizes $\Cost\left(t_i,\x\right)$ for each $i\in\mathcal{I}_{|S|-1}$. We now present a few simple observations that will help us develop a straightforward approach to find these trees.

For $s_{0}^{(l)} \in \mathcal{T}_{0}^{(l)}$, $s_{i+1}^{(r)} \in \mathcal{T}_{i+1}^{(r)}$ and $i\in\mathcal{I}_{|S|-2}$ we define $\Tie_{i}\Bigl(s_0^{(l)},s_{i+1}^{(r)}\Bigr)$ to be the unique tree in $\mathcal{T}_{i}$ that is created by taking $s_{i+1}^{(r)}$ and adding a new edge labeled $a_i$ to its root that connects to $s_0^{(l)}$. Consequently this new tree has $l+r$ codewords and is therefore in $\mathcal{T}_{i}^{(l+r)}$. Note that the converse is true, that is for any $s_{i}^{(D)}\in\mathcal{T}_{i}^{(D)}$ satisfying $D\geq2$ and $i\in\mathcal{I}_{|S|-2}$ there exists $s_{0}^{(l)} \in \mathcal{T}_{0}^{(l)}$ and $s_{i+1}^{(r)} \in \mathcal{T}_{i+1}^{(r)}$ such that $l,r\geq1$, $l+r=D$ and $s_{i}^{(D)}:=\Tie_{i}\Bigl(s_0^{(l)},s_{i+1}^{(r)}\Bigr)$. We can see an illustration of this on the left-hand side of Figure \ref{fig:tree dissection}.

For $s_{0}^{(l)} \in \mathcal{T}_{0}^{(l)}$ and $s_{0}^{(r)} \in \mathcal{T}_{0}^{(r)}$ we define $\Tie_{|S|-2}\Bigl(s_0^{(l)},s_{0}^{(r)}\Bigr)$ to be the unique tree in $\mathcal{T}_{|S|-2}$ that is created by taking a new node as the root having 2 edges labeled $a_{|S|-2}$ and $a_{|S|-1}$ connecting to $s_{0}^{(l)}$ and $s_0^{(r)}$ respectively. Consequently this new tree has $l+r$ codewords and is therefore in $\mathcal{T}_{|S|-2}^{(l+r)}$. Note that similar to the previous case, the converse is true, that is for any $s_{|S|-2}^{(D)}\in\mathcal{T}_{|S|-2}^{(D)}$ satisfying $D\geq2$ there exists $s_{0}^{(l)} \in \mathcal{T}_{0}^{(l)}$ and $s_{0}^{(r)} \in \mathcal{T}_{0}^{(r)}$ such that $l,r\geq1$, $l+r=D$ and $s_{|S|-2}^{(D)}:=\Tie_{|S|-2}\Bigl(s_0^{(l)},s_{0}^{(r)}\Bigr)$. We can see an illustration of this on the right-hand side of figure \ref{fig:tree dissection}.

Next consider $s_{i}^{(D)}\in\mathcal{T}_{i}^{(D)}$ satisfying $D\geq2$ and $i\in\mathcal{I}_{|S|-2}$ and  the unique trees $s_{0}^{(l)} \in \mathcal{T}_{0}^{(l)}$ and $s_{i+1}^{(r)} \in \mathcal{T}_{i+1}^{(r)}$ satisfying $s_{i}^{(D)}:=\Tie_{i}\Bigl(s_0^{(l)},s_{i+1}^{(r)}\Bigr)$. 
Consider encoding an input string using $s_{i}^{(D)}$. The probability that the first character of our input is $a_i$ is equal to $\frac{p(a_i)}{\sum_{j=i}^{|S|-1}p(a_j)}$. Denote $\alpha_{i}=\frac{p(a_i)}{\sum_{j=i}^{|S|-1}p(a_j)}$. This means that the probability that we will parse our next codeword using $s_{0}^{(l)}$ and $s_{i+1}^{(r)}$ is $\alpha_i$ and $1-\alpha_i$ respectively. Using this observation we can say:
\begin{equation}
    \forall{j\in\mathcal{I}_{|S|-1}}:\space q_j\Bigl(s_i^{(D)}\Bigr)=\alpha_i 
 \cdot q_j\Bigl(s_0^{(l)}\Bigr)+(1-\alpha_i) \cdot q_j\Bigl(s_{i+1}^{(r)}\Bigr)
\end{equation}
\begin{equation}
    E[L_{s_{i}^{(D)}}(S)]=\alpha_i \cdot \Bigl(E[L_{s_0^{(l)}}(S)]+1\Bigr)+(1-\alpha_i) \cdot E[L_{s_{i+1}^{(r)}}(S)]
\end{equation}
Hence
\begin{eqnarray} \label{eq:type i cost decomposition} \nonumber
&\Cost(s_i^{(D)},\x)=E[L_{s_i^{(D)}}(S)]+ \sum_{j=1}^{|S|-2}q_j\Bigl(s_i^{(D)}\Bigr)\cdot x_j\\ \nonumber
&=\alpha_i \cdot \Bigl(E[L_{s_0^{(l)}}(S)]+1\Bigr) \nonumber \\
&+(1-\alpha_i) \cdot E[L_{s_{i+1}^{(r)}}(S)] \nonumber \\
&+\sum_{j=1}^{|S|-2}\Biggl(\alpha_i \cdot q_j\Bigl(s_0^{(l)}\Bigr)+(1-\alpha_i) \cdot q_j\Bigl(s_{i+1}^{(r)}\Bigr)\Biggr)\cdot x_j \nonumber \\ 
&=\alpha_i+\alpha_{i} \cdot \Biggl(E[L_{s_0^{(l)}}(S)]+\sum_{j=1}^{|S|-1}q_j\Bigl(s_0^{(l)}\Bigr)\cdot x_j\Biggr) \nonumber \\
&+(1-\alpha_i) \cdot \Biggl(E[L_{s_{i+1}^{(r)}}(S)]+q_j\Bigl(s_{i+1}^{(r)}\Bigr)\cdot x_j\Biggr) \nonumber \\ 
&=\alpha_i+\alpha_{i}\cdot \Cost\Bigl(s_0^{(l)},\x\Bigr)+(1-\alpha_i)\cdot \Cost\Bigl(s_{i+1}^{(r)},\x\Bigr) 
\end{eqnarray}
Now consider $s_{|S|-2}^{(D)}\in\mathcal{T}_{|S|-2}^{(D)}$ satisfying $D\geq2$ and consider the unique trees $s_{0}^{(l)} \in \mathcal{T}_{0}^{(l)}$ and $s_{0}^{(r)} \in \mathcal{T}_{0}^{(r)}$ satisfying $s_{|S|-2}^{(D)}:=\Tie_{|S|-2}\Bigl(s_0^{(l)},s_{0}^{(r)}\Bigr)$. Denote $\alpha_{|S|-2}=\frac{p(a_{|S|-2})}{p(a_{|S|-2})+p(a_{|S|-1})}$. Similar to before, we can say:
\begin{eqnarray}
    \nonumber
    \Cost\Bigl(s_{|S|-2}^{(D)},\x\Bigr)&=&E[L_{s_{|S|-2}^{(D)}}(S)]+ \sum_{j=1}^{|S|-2}q_j\Bigl(s_{|S|-2}^{(D)}\Bigr)\cdot x_j\\ \nonumber
    &=&\alpha_i \cdot \Bigl(E[L_{s_0^{(l)}}(S)]+1\Bigr) \nonumber\\
    &+&(1-\alpha_i) \cdot \Bigl(E[L_{s_{0}^{(r)}}(S)]+1\Bigr) \nonumber \\
    &+&\sum_{j=1}^{|S|-2}\Biggl(\alpha_i q_j\Bigl(s_0^{(l)}\Bigr) \nonumber\\
    &+&(1-\alpha_i) \cdot q_j\Bigl(s_{0}^{(r)}\Bigr)\Biggr)\cdot x_j \nonumber\\
    &=&1+\alpha_{|S|-2}\cdot \Cost\Bigl(s_0^{(l)},\x\Bigr)\nonumber \\
    &+&(1-\alpha_{|S|-2})\cdot \Cost\Bigl(s_{0}^{(r)},\x\Bigr) \nonumber
\end{eqnarray}
These observations lead to a divide and conquer approach that will help us use dynamic programming to solve this problem.

\begin{theorem} \label{theorem:Tunstal DP}
The output $\left\{\OPT(i:D), t^{(D)}_i \mid i \in\mathcal{I}_{|S|-1}\right\}$ of Algorithm \ref{alg:dp} satisfies $t_i^{(D)}=$ $\argmax\limits_{s_{i}\in \mathcal{T}_{i}^{(D)}} \Cost(s_i,\x)$ and $\OPT(i:D)=\Cost\Bigl(t_i^{(D)},\x\Bigr)$ for $i\in\mathcal{I}_{|S|-1}$.
\end{theorem}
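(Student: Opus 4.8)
The plan is to prove both assertions simultaneously by strong induction on the codeword count $d$, showing for every type $i\in\mathcal{I}_{|S|-1}$ that $\OPT(i:d)=\max_{s_i\in\mathcal{T}_i^{(d)}}\Cost(s_i,\x)$ and that the stored tree $t_i^{(d)}$ attains this maximum. The two structural facts derived just above the theorem are exactly the optimal-substructure ingredients the induction needs: the bijective $\Tie$-decomposition of every tree in $\mathcal{T}_i^{(d)}$ into a pair of strictly smaller subtrees, and the additive cost identity expressing $\Cost(\Tie_i(\cdot,\cdot),\x)$ as an affine combination of the two subtree costs.

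For the base case $d=1$ I would argue directly: $\mathcal{T}_i^{(1)}$ consists of the single tree whose only node is the root carrying the empty-string codeword, so that $\max_{s_i\in\mathcal{T}_i^{(1)}}\Cost(s_i,\x)$ is a one-line computation from the definitions of $E[L_{t_i}(S)]$ and of the $q_j$, which I would check against the initialized values of $\OPT(i:1)$ and $t_i^{(1)}$. I would also record the dependency structure that makes the recursion well-posed: line~\ref{line:opt update} reads only entries $\OPT(0:l)$ and $\OPT(i+1:d-l)$ with $1\le l\le d-1$, so both second arguments are strictly smaller than $d$ and are already final when row $d$ is filled; in particular the order in which the types $i$ are processed within a row is immaterial.

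For the inductive step, fix $d\ge 2$ and assume the statement for all smaller counts. Take first $i<|S|-2$. By the bijective decomposition, $\mathcal{T}_i^{(d)}$ is in one-to-one correspondence with the pairs $(s_0^{(l)},s_{i+1}^{(r)})\in\mathcal{T}_0^{(l)}\times\mathcal{T}_{i+1}^{(r)}$ ranging over all splits $l+r=d$, $l,r\ge1$. Substituting the additive cost identity turns the global maximum into
\[
\max_{s_i\in\mathcal{T}_i^{(d)}}\Cost(s_i,\x)
=\max_{l+r=d}\ \max_{s_0^{(l)},\,s_{i+1}^{(r)}}\Bigl(\alpha_i+\alpha_i\,\Cost(s_0^{(l)},\x)+(1-\alpha_i)\,\Cost(s_{i+1}^{(r)},\x)\Bigr).
\]
Because $\alpha_i\ge0$ and $1-\alpha_i\ge0$, and the two subtrees range over independent families (for a fixed split the decomposition couples them through nothing but the constraint $l+r=d$), the inner joint maximum splits into the sum of the two separate maxima; invoking the induction hypotheses $\max_{\mathcal{T}_0^{(l)}}\Cost=\OPT(0:l)$ and $\max_{\mathcal{T}_{i+1}^{(r)}}\Cost=\OPT(i+1:d-l)$ then reproduces exactly the quantity maximized in line~\ref{line:opt update}. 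Hence $\OPT(i:d)$ equals the true optimum, and the tree $t_i^{(d)}=\Tie_i(t_0^{(l')},t_{i+1}^{(d-l')})$ assembled from the inductively optimal subtrees at the optimal split $l'$ realizes it. The boundary type $i=|S|-2$ is handled identically, with the second $\Tie$-decomposition and its cost identity (leading constant $1$, both children of type $0$) matching the update for $\OPT(|S|-2:d)$.

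I expect the only genuine content---hence the main obstacle---to be this \emph{separation-of-maximum} step: justifying that maximizing the affine combination $\alpha_i\Cost(s_0^{(l)},\x)+(1-\alpha_i)\Cost(s_{i+1}^{(r)},\x)$ over the joint choice of subtrees equals gluing the two independently optimal subtrees. This rests on two points that must be stated cleanly: the nonnegativity of the mixing coefficients (so each scalar can be pulled through its maximum) and the genuine independence of the two subtree families furnished by the bijection (so the maximum of a sum equals the sum of the maxima, and no otherwise-optimal pair is lost). Everything else---the base case and the bookkeeping of the split index---should be routine.
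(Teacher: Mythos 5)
Your proposal is correct and follows essentially the same route as the paper's own proof: induction on $d$, the unique $\Tie$-decomposition of each tree in $\mathcal{T}_i^{(d)}$, and the affine cost identity with nonnegative coefficients $\alpha_i$ and $1-\alpha_i$ to reduce the global maximum to independent subproblem maxima. The paper phrases the key step as an optimal-substructure ("the subtrees must themselves be optimal") argument rather than an explicit interchange of maximum and sum, but these are the same argument, and your version is if anything slightly more explicit about why the nonnegativity of the mixing coefficients and the independence of the two subtree families are what make it work.
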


\begin{proof}
By induction on that $d$ we show that $t_{i}^{(d)}$ and $\OPT(i:d)$ derived from Algorithm \ref{alg:dp} satisfy 
$$t_i^{(d)}:=\argmax\limits_{t_{i}\in \mathcal{T}_{i}^{(d)}} \Cost(t_i,\x)$$ and $$\OPT(i:d)=\Cost\Bigl(t_i^{(d)},\x\Bigr)$$ for $i\in\mathcal{I}_{|S|-1}$

For the base case we have $t_i^{(1)}:=$Root and $\OPT(i:1)=0$ for $i\in\mathcal{I}_{|S|-1}$ where Root is a tree with only one node. For the inductive step, consider the statement true for all the values less than $d$ and consider the tree $s_{i}^{(d)} \in \mathcal{T}_{i}^{(d)}$ where $i\in\mathcal{I}_{|S|-2}$ and $s_i^{(d)}=$ $\argmax\limits_{s_{i}\in \mathcal{T}_{i}^{(d)}} \Cost(s_{i},\x)$.

As mentioned before, we know there exists unique trees $s_{0}^{(l)} \in \mathcal{T}_{0}^{(l)}$ and $s_{i+1}^{(d-l)} \in \mathcal{T}_{i+1}^{(d-l)}$ such that $1 \leq l \leq d-1$ and $s_{i}^{(d)}:=\Tie_{i}\Bigl(s_0^{(l)},s_{i+1}^{(d-l)}\Bigr)$. Now considering equation \ref{eq:type i cost decomposition}, in order for $s_{i}^{(d)}$ to maximize $\Cost\Bigl(s_{i}^{(d)},\x\Bigr)$, $s_{0}^{(l)} \in \mathcal{T}_{0}^{(l)}$ and $s_{i+1}^{(d-l)} \in \mathcal{T}_{i+1}^{(d-l)}$ must maximize $\Cost\Bigl(s_{0}^{(l)},\x\Bigr)$ and $\Cost\Bigl(s_{i+1}^{(d-l)},\x\Bigr)$ respectively.

In other words $s_{0}^{(l)}:=t_{0}^{(l)}$ and $s_{i+1}^{(d-l)}:=t_{i+1}^{(d-l)}$. This means that in order to find $s_i^{(d)}$ we can run  through all possibilities of $l$ and see in which case $\Cost\Bigl(s_i^{(d)},\x\Bigr)$ is maximized. This  is what's happening in line \ref{line:maximum l} and \ref{line:t_i update} of Algorithm \ref{alg:dp}.

We can also see that in line \ref{line:opt update}, $\OPT(i:d)$ is being calculated accordingly.
In a similar fashion we can show that the tree  $s_{|S|-2}^{(d)}$ satisfying $$s_{|S|-2}^{(d)}:=\argmax\limits_{s_{|S|-2}\in \mathcal{T}_{|S|-2}^{(d)}} \Cost(s_{|S|-2},\x)$$ can be found by running  through all possibilities of $l$ satisfying $1\leq l\leq d-1$ and observing for which value it maximizes $$\Cost\Biggl(\Tie_{|S|-2}\Bigl(t_0^{(l)},t_{0}^{(d-l)}\Bigr),\x\Biggr)$$ and calculating $\OPT(|S|-2:d)$ accordingly. This completes our inductive step and therefore proves the theorem.
\end{proof}


\begin{proposition} \label{prop: DP running time appendix}
    Algorithm \ref{alg:dp} can find the final trees in $O(D^2\cdot |S|)$ running time.
\end{proposition}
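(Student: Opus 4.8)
The plan is to bound the running time by a direct accounting of the two nested loops in Algorithm~\ref{alg:dp}, while being careful that the tree-construction steps do not secretly inflate the cost.

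First I would count the arithmetic. The outer loop runs over $d = 2, \ldots, D$ and the inner loop over $i = 0, \ldots, |S|-3$, together with the single extra computation for $i = |S|-2$; this gives $O(D \cdot |S|)$ table cells $\OPT(i:d)$ to fill. For each cell, lines~\ref{line:opt update} and~\ref{line:maximum l} compute a maximum over the split point $l$, which ranges over $O(d) = O(D)$ candidates. Provided $\OPT(0:l)$, $\OPT(i+1:d-l)$, and $\alpha_i$ are already available, each candidate is evaluated with a constant number of table lookups, multiplications, and additions, so each cell costs $O(D)$ time. Multiplying out, the numeric part of the algorithm runs in $O(D^2 \cdot |S|)$ time. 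The coefficients $\alpha_i = p(a_i)/\sum_{j=i}^{|S|-1} p(a_j)$ can all be precomputed once in $O(|S|)$ time using suffix sums of the probabilities, which is negligible.

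The key step, and the main obstacle, is handling the tree assignments in lines~\ref{line:t_i update} and~\ref{line:second t_i update} without degrading the bound. A tree in $\mathcal{T}_i^{(d)}$ has $\Theta(d)$ nodes, so if each $\Tie$ operation explicitly copied its two argument subtrees, a single assignment would cost $O(D)$ and the overall running time would become $O(D^3 \cdot |S|)$. To avoid this I would store each $t^{(d)}_i$ \emph{implicitly}, as a record holding the edge label together with pointers to its two constituent optimal subtrees $t^{(l')}_0$ and $t^{(d-l')}_{i+1}$ (respectively $t^{(l')}_0$ and $t^{(d-l')}_0$ in the $i=|S|-2$ case). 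With this representation each $\Tie$ is performed in $O(1)$ time, so the bookkeeping over all cells costs only $O(D \cdot |S|)$, again within budget.

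Finally I would note that the algorithm only needs to output the $|S|-1$ trees $t^{(D)}_i$, each of which is materialized at the end by recursively expanding the pointer records. Since each output tree has $O(D)$ nodes, there are $O(D \cdot |S|)$ nodes to emit in total. Combining the contributions, namely $O(D^2 |S|)$ for the maximizations, $O(D|S|)$ for the constant-time $\Tie$ operations and final reconstruction, and $O(|S|)$ preprocessing, yields the claimed $O(D^2 \cdot |S|)$ bound. The correctness of the quantities being combined is exactly Theorem~\ref{theorem:Tunstal DP}, so only the timing analysis above remains to be written out.
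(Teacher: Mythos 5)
Your proof is correct and takes essentially the same approach as the paper: both avoid the $O(D^3|S|)$ blow-up by not materializing trees inside the DP loop (the paper records the optimal split point $l'$ in a table, you store an equivalent pointer record) and then reconstruct the $|S|-1$ output trees at the end in $O(D\cdot|S|)$ time. Your explicit accounting of the $O(D\cdot|S|)$ cells at $O(D)$ each is a slightly more detailed version of what the paper asserts, but the argument is the same.
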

\begin{proof}
    Consider the following modification; We work through Algorithm \ref{alg:dp} but instead of doing the Tie operations at line \ref{line:t_i update} and \ref{line:second t_i update} we just save the corresponding value of $l^{\prime}$ for each $d$ and $i$ in a separate table. The reason we do this is because we don't actually need to create every tree and we can just use this new table to recursively build the final trees after the loops are completed. This doesn't change the time complexity but it does decrease the space complexity.
    
    Call this new table $E$. This means that $E(i:d)$ tells us how many codewords are in the left subtree of $t_{i}^{(d)}$ when decomposed according to Figure \ref{fig:tree dissection}. Using this information we can recursively create the left and right subtrees in order to find $t_{i}^{(d)}$. Creating the final trees with this procedure has time complexity of $O(|S|\cdot D)$ because at each step we are adding at least one edge and then recursively finding the left and right subtrees. Since each step takes $O(1)$ time and our trees have at most $2D$ nodes, we create each tree in $O(D)$ time and find all the final trees in $O(|S|\cdot D)$ which doesn't affect the time of filling in the tables resulting in an $O(D^2\cdot |S|)$ overall run time.
\end{proof}

\textbf{Generalized Dynamic Programming:} \newline
Consider $P\subseteq \mathcal{I}_{|S|-1}$ such that $0\in P$ and consider the problem of adding the constraint $q_j(t_{i}^{(d)})=0$ for every $j\notin P$ 
and $i\in \mathcal{I}_{|S|-1}$. This means we want to find the trees having the optimal costs defined previously, with the added constraint that each tree has a transition probability of zero for trees having a type that is not in $P$. As discussed in the main body of the paper, we need this in order to be able to apply Lemma \ref{lem:newsol}

Modifying the previous dynamic program to handle this generalization is not difficult. Using the decomposition in figure \ref{fig:tree dissection}, we can see that in order for $s_{i}^{(d)}$ to have the optimal cost between trees that have the added constraint, its subtrees need to be optimal trees that satisfy the added constraint, which means that we can use almost the exact same dynamic programming formulation.

The only difference is that the initialization is going to be different. Consider $j\in \mathcal{I}_{|S|-2}$ and let $i\in \mathcal{I}_{|S|-2}$ be the smallest number satisfying $j < i$ and $i \in P$.  If no such $i$ exists then let $i=|S|-1$. We can see that in this case there are no type $j$ trees having less than $i-j$ codewords and there is exactly one type $j$ tree having $i-j$ codewords which is created by taking a root node having $i-j$ edges with $a_j,a_{j+1},..., a_{i-1}$ as their labels. Using this fact we can initialize the dynamic programming by first adding this information into the table. The rest of the dynamic programming will be almost exactly the same, except that we have to slightly modify it to avoid adding trees that are not permitted.

The running time for this generalized version will remain the same  $O(D^2\cdot |S|)$.

\subsection{Proofs missing from the main paper}
\textbf{Proof of Lemma \ref{lem:bits}}.
\begin{proof}
Since every node $w$ in a $t_i$ is the probability of at most $D$ $p(a_i)$ values,
each node's probability $p_{W}(w)$ can be written as $P' 2^{-bD}$ where 
$P'$ is an integer satisfying $0 \le P' \le 2^{bD}.$ 
Then, from Equation (\ref{eq:pwi}), $p_{W_i}(w)$ can also be written 
as $P'' 2^{-bD-1}$ where 
$P''$ is an integer satisfying $0 \le P'' \le 2^{bD+1}.$

According to equation  (\ref{eq:pci}), $p_{C_i}(w)$ is the sum of at most $D$ such probabilities. Thus, $p_{C_i}(w)$ can be represented by $\log_2(D) + bD+1 = O(b D)$ bits.

Since the nodes corresponding to dictionary words have disjoint probabilities,
the sum of the costs of any subset of those nodes is also a probability 
that can be written using  $O(b D)$ bits. Thus every 
$q_j(t_i)$ can be written using at most $O(b D)$  bits.

Recall from Equation (\ref{eq:apl}) that 
 the average length of code words in  a type $i$ tree $t_i$ is 
    $$
        E[L_{t_i}(S)] = \sum_{w \in \mathcal{D}(t_i)} p_{C_i}(w) \cdot l(w).
 $$

Moreover, $l(w)$ is upper-bounded by $D$. Therefore, each term in 
the summand can be written using at most $O(b D)$  bits.
Since $E[L_{t_i}(S)]$ is the sum of at most $D$ of these terms, $E[L_{t_i}(S)]$ can be represented by $O(bD)$ bits.

Recall that our transformation from the AIVF problem to the MCMC problem required setting $\cost(S_i)=D - E[L_{t_i}(S)].$ But this only adds at most another $O(\log_2 D)$ bits to the representation so we are done.
\end{proof}

\textbf{Proof of Lemma \ref{lem: bounding lemma}}
\begin{proof}
    We start by proving  (1). Note that
    \[
        f_i \lp \x , S_i \rp = \cost \lp S_i \rp  + \sum_{j=1}^{m-1}q_j \lp S_i \rp x_j - x_i
    \]
    Since for $j \in \langle m-1 \rangle \setminus \{i\}$ we have $x_j \in [-C, C]$, it is clear that  
    $$    f_i \lp \x , S_i \rp \leq \cost \lp S_i \rp + \sum_{j \in \langle m-1 \rangle \setminus \{i\}} q_j \lp S_i \rp C - \lp 1 - q_i(S_i) \rp x_i
   $$
    which can be rewritten as 
    \[
        f_i \lp \x , S_i \rp \leq \cost \lp S_i \rp + \lp 1 - q_0(S_i) - q_i(S_i) \rp C - \lp 1 - q_i(S_i) \rp x_i.
    \]
    We claim that
    \[
        N + \lp 1 - q_0(S_i) - q_i(S_i) \rp C - \lp 1 - q_i(S_i) \rp C \leq 0
    \]
    which is true if and only if
    \[
        q_0(S_i) C \geq N.
    \]
    Note that $q_0(S_k) \geq \beta$ and $C \geq \beta^{-1} N$; therefore, the above inequality is true and the claim is proved. Moreover, it is easy to further conclude that:
    \begin{align*}
            \cost \lp S_i \rp + \lp 1 - q_0(S_i) - q_i(S_i) \rp C - \lp 1 - q_i(S_i) \rp x_i < \\
            N + \lp 1 - q_0(S_i) - q_i(S_i) \rp C - \lp 1 - q_i(S_i) \rp C
    \end{align*}
So,  case (1) of the lemma is proved.
    
    For the second part, by a similar argument 
    we have
    \[
        f_i \lp \x , S_i \rp \geq \cost \lp S_i \rp -  \lp 1 - q_0(S_i) - q_i(S_i) \rp C - \lp 1 - q_i(S_i) \rp x_i
    \]
    Since $\cost \lp S_i \rp \geq 0$ we have
    \begin{align*}
        f_i \lp \x , S_i \rp &> -  \lp 1 - q_0(S_i) - q_i(S_i) \rp C + \lp 1 - q_i(S_i) \rp C \\
        &= q_0(S_i)C \geq N
    \end{align*}

    and cae (2) of the Lemma is proved.
\end{proof}

\end{document}